\providecommand{\tabularnewline}{\\}
\providecommand{\algorithmname}{Algorithm}
  \theoremstyle{acmdefinition}
  \newtheorem{defn}{\protect\definitionname}
\theoremstyle{acmplain}
\newtheorem{thm}{\protect\theoremname}
  \theoremstyle{acmplain}
  \newtheorem{lem}{\protect\lemmaname}
  \theoremstyle{remark}
  \newtheorem{rem}{\protect\remarkname}
  \providecommand{\definitionname}{Definition}
  \providecommand{\lemmaname}{Lemma}
  \providecommand{\remarkname}{Remark}
\providecommand{\theoremname}{Theorem}
\begin{document}
\title{MVG Mechanism: Differential Privacy under Matrix-Valued Query}

\author{Thee Chanyaswad}
\authornote{Currently at KBTG Machine Learning Team, Thailand, E-mail: theerachai.c@kbtg.tech}
\affiliation{%
    \institution{Princeton University, USA}
}
\email{tc7@princeton.edu}

\author{Alex Dytso}
\affiliation{%
  \institution{Princeton University, USA}
}
\email{adytso@princeton.edu}

\author{H. Vincent Poor}
\affiliation{%
  \institution{Princeton University, USA}
  }
\email{poor@princeton.edu}

\author{Prateek Mittal}
\affiliation{%
   \institution{Princeton University, USA}
}
\email{pmittal@princeton.edu}

\begin{abstract}
Differential privacy mechanism design has traditionally been tailored
for a scalar-valued query function. Although many mechanisms such
as the Laplace and Gaussian mechanisms can be extended to a matrix-valued
query function by adding i.i.d. noise to each element of the matrix,
this method is often suboptimal as it forfeits an opportunity to exploit
the structural characteristics typically associated with matrix analysis.
To address this challenge, we propose a novel differential privacy
mechanism called the \emph{Matrix-Variate Gaussian (MVG) mechanism},
which adds a \emph{matrix-valued} noise drawn from a matrix-variate
Gaussian distribution, and we rigorously prove that the MVG mechanism
preserves $(\epsilon,\delta)$-differential privacy. Furthermore,
we introduce the concept of \emph{directional noise} made possible
by the design of the MVG mechanism. Directional noise allows the impact
of the noise on the utility of the matrix-valued query function to
be moderated. Finally, we experimentally demonstrate the performance
of our mechanism using three matrix-valued queries on three privacy-sensitive
datasets. We find that the MVG mechanism can notably outperforms four
previous state-of-the-art approaches, and provides comparable utility
to the non-private baseline.
\end{abstract}

%
% The code below should be generated by the tool at
% http://dl.acm.org/ccs.cfm
% Please copy and paste the code instead of the example below.
%
\begin{CCSXML}
<ccs2012>
<concept>
<concept_id>10002978.10003029.10011150</concept_id>
<concept_desc>Security and privacy~Privacy protections</concept_desc>
<concept_significance>500</concept_significance>
</concept>
</ccs2012>
\end{CCSXML}

\ccsdesc[500]{Security and privacy~Privacy protections}

\keywords{differential privacy; matrix-valued query; matrix-variate Gaussian; directional noise; MVG mechanism}

\maketitle

\section{Introduction}

Differential privacy \cite{RefWorks:186,RefWorks:195} has become
the gold standard for a rigorous privacy guarantee. This has prompted
the development of many mechanisms including the classical Laplace
mechanism \cite{RefWorks:195} and the Exponential mechanism \cite{RefWorks:192}.
In addition, there are other mechanisms that build upon these two
classical ones such as those based on data partition and aggregation
\cite{RefWorks:224,RefWorks:241,RefWorks:242,RefWorks:243,RefWorks:244,RefWorks:245,RefWorks:246,RefWorks:247,RefWorks:248,RefWorks:219,RefWorks:193},
and those based on adaptive queries \cite{RefWorks:236,RefWorks:221,RefWorks:316,RefWorks:317,RefWorks:318,RefWorks:319,RefWorks:175}.
From this observation, differentially-private mechanisms may be categorized
into the basic and derived mechanisms. Privacy guarantee of the basic
mechanisms is self-contained, whereas that of the derived mechanisms
is achieved through a combination of basic mechanisms, composition
theorems, and the post-processing invariance property \cite{RefWorks:185}.

In this work, we design a \emph{basic mechanism} for \emph{matrix-valued
queries}. Existing basic mechanisms for differential privacy are designed
typically for scalar-valued query functions. However, in many practical
settings, the query functions are multi-dimensional and can be succinctly
represented as matrix-valued functions. Examples of matrix-valued
query functions in the real-world applications include the covariance
matrix \cite{RefWorks:249,RefWorks:194,RefWorks:178}, the kernel
matrix \cite{RefWorks:33}, the adjacency matrix \cite{RefWorks:329},
the incidence matrix \cite{RefWorks:329}, the rotation matrix \cite{RefWorks:346},
the Hessian matrix \cite{RefWorks:347}, the transition matrix \cite{RefWorks:348},
and the density matrix \cite{RefWorks:349}, which find applications
in statistics \cite{RefWorks:350}, machine learning \cite{RefWorks:376},
graph theory \cite{RefWorks:329}, differential equations \cite{RefWorks:347},
computer graphics \cite{RefWorks:346}, probability theory \cite{RefWorks:348},
quantum mechanics \cite{RefWorks:349}, and many other fields \cite{RefWorks:328}.

One property that distinguishes the matrix-valued query functions
from the scalar-valued query functions is the relationship and interconnection
among the elements of the matrix. One may naively treat these matrices
as merely a collection of scalar values, but that could prove suboptimal
since the \emph{structure} and \emph{relationship} among these scalar
values are often informative and essential to the understanding and
analysis of the system. For example, in graph theory, the adjacency
matrix is symmetric for an undirected graph, but not for a directed
graph \cite{RefWorks:329} \textendash{} an observation which is implausible
to extract from simply looking at the collection of elements without
considering how they are arranged in the matrix.

In differential privacy, the standard method for a matrix-valued
query function is to extend a scalar-valued mechanism by adding \emph{independent
and identically distributed} (i.i.d.) noise to each element of the
matrix \cite{RefWorks:195,RefWorks:186,RefWorks:220}. However, this
method may not be optimal as it fails to utilize the structural characteristics
of the matrix-valued noise and query function. Although some advanced
methods have explored this possibility in an iterative/procedural
manner \cite{RefWorks:236,RefWorks:221,RefWorks:235}, the structural characteristics
of the matrices are still largely under-investigated. This is partly
due to the lack of a basic mechanism that is directly designed for
matrix-valued query functions, making such utilization and application
of available tools in matrix analysis challenging.

In this work, we formalize the study of the matrix-valued differential
privacy, and present a new basic mechanism that can readily exploit
the structural characteristics of the matrices \textendash{} the \emph{Matrix-Variate
Gaussian (MVG) mechanism}. The high-level concept of the MVG mechanism
is simple \textendash{} it adds a matrix-variate Gaussian noise scaled
to the $L_{2}$-sensitivity of the matrix-valued query function (cf.
Fig. \ref{fig:mvg_schematic}). We rigorously prove that the MVG mechanism
guarantees $(\epsilon,\delta)$-differential privacy. Moreover, due
to the multi-dimensional nature of the noise and the query function,
the MVG mechanism allows flexibility in the design via the novel notion
of \emph{directional noise}. An important consequence of the concept
of directional noise is that the matrix-valued noise in the MVG mechanism
can be devised to affect certain parts of the matrix-valued query
function less than the others, while providing \emph{the same} privacy
guarantee. In practice, this property could be beneficial as the noise
can be tailored to minimally impact the intended utility.

\begin{figure}
\begin{centering}
\includegraphics[scale=0.38]{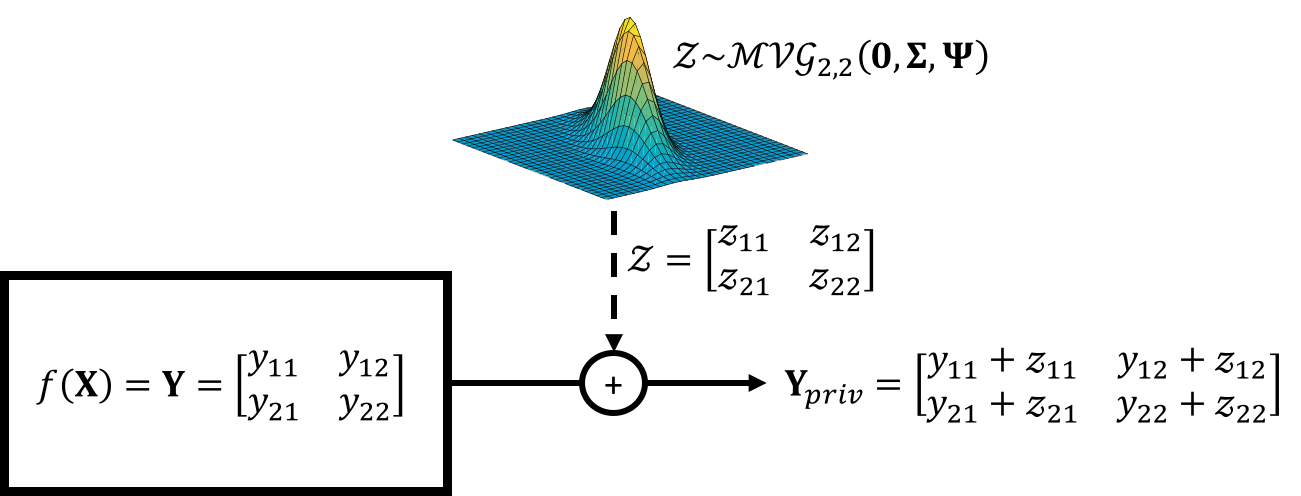} 
\par\end{centering}
\caption{Given a matrix-valued query function $f(\mathbf{X})\in\mathbb{R}^{m\times n}$,
the MVG mechanism adds a \emph{matrix-valued noise} drawn from the
\emph{matrix-variate Gaussian distribution} $\mathcal{MVG}_{m,n}(\mathbf{0},\boldsymbol{\Sigma},\boldsymbol{\Psi})$
to guarantee $(\epsilon,\delta)$-differential privacy. The schematic
shows an example when $m=n=2$. \label{fig:mvg_schematic}}

\end{figure}

Finally, to illustrate the effectiveness of the MVG mechanism, we
conduct experiments on three privacy-sensitive real-world datasets
\textendash{} Liver Disorders \cite{RefWorks:322,RefWorks:413}, Movement
Prediction \cite{RefWorks:396}, and Cardiotocography \cite{RefWorks:322,RefWorks:412}.
The experiments include three tasks involving matrix-valued query
functions \textendash{} regression, finding the first principal component,
and covariance estimation. The results show that the MVG mechanism
can outperform four prior state-of-the-art mechanisms \textendash{}
the Laplace mechanism, the Gaussian mechanism, the Exponential mechanism, and the
JL transform \textendash{} in utility in all experiments. 

To summarize, our main contributions are as follows. 
\begin{itemize}
\item  We formalize the study of matrix-valued query functions
in differential privacy and introduce the novel \emph{Matrix-Variate
Gaussian (MVG) mechanism}. 
\item We rigorously prove that the MVG mechanism guarantees $(\epsilon,\delta)$-differential
privacy. 
\item We introduce a novel concept of \emph{directional noise}, and propose
two simple algorithms to implement this novel concept with the MVG
mechanism. 
\item We evaluate our approach on three real-world datasets and show that
our approach can outperform four prior mechanisms in all experiments,
and yields utility close to the non-private baseline. 
\end{itemize}

\section{Prior Works} \label{sec:Prior-Works}

Existing mechanisms for differential privacy may be categorized into
two types: the \emph{basic mechanism} \cite{RefWorks:195,RefWorks:186,RefWorks:313,RefWorks:192,RefWorks:220,RefWorks:277,RefWorks:397,RefWorks:398,RefWorks:399}; and the \emph{derived mechanism} \cite{RefWorks:193,RefWorks:401,RefWorks:224,RefWorks:241,RefWorks:242,RefWorks:243,RefWorks:244,RefWorks:245,RefWorks:246,RefWorks:247,RefWorks:248,RefWorks:219,RefWorks:253,RefWorks:187,RefWorks:221,RefWorks:236,RefWorks:237,RefWorks:314,pca-gauss,RefWorks:400,RefWorks:337,RefWorks:339,RefWorks:316,RefWorks:317,RefWorks:236,RefWorks:175,RefWorks:319}. Since
our work concerns the basic mechanism design, we focus our discussion
on this type, and provide a general overview of the other.

\subsection{Basic Mechanisms} \label{subsec:Basic-Mechanisms}

Basic mechanisms are those whose privacy guarantee is self-contained,
i.e. it does not deduce the guarantee from another mechanism. Here,
we discuss four popular existing basic mechanisms.

\subsubsection{Laplace Mechanism.}

The classical \emph{Laplace mechanism} \cite{RefWorks:195} adds noise
drawn from the Laplace distribution scaled to the $L_{1}$-sensitivity
of the query function. It was initially designed for a scalar-valued
query function, but can be extended to a matrix-valued query function
by adding i.i.d. Laplace noise to each element of the matrix. The
Laplace mechanism provides the strong $\epsilon$-differential privacy
guarantee and is relatively simple to implement. However, its generalization
to a matrix-valued query function does not automatically utilize the
structure of the matrices involved.

\subsubsection{Gaussian Mechanism.}

The \emph{Gaussian mechanism} \cite{RefWorks:220,RefWorks:186,RefWorks:277}
uses i.i.d. additive noise drawn from the Gaussian distribution scaled
to the $L_{2}$-sensitivity. The Gaussian mechanism guarantees $(\epsilon,\delta)$-differential
privacy. Like the Laplace mechanism, it also does not automatically
consider the structure of the matrices.

\subsubsection{Johnson-Lindenstrauss (JL) Transform.}

The \emph{JL transform method} \cite{RefWorks:313} uses multiplicative
noise to guarantee $(\epsilon,\delta)$-differential privacy. It is,
in fact, a rare basic mechanism designed for a matrix-valued query
function. Despite its promise, previous works show that the JL transform
method can be applied to queries with certain properties only, e.g.
\begin{itemize}
\item Blocki et al. \cite{RefWorks:313} use a random matrix, whose
elements are drawn i.i.d. from a Gaussian distribution, and the method
is applicable to the Laplacian of a graph and the covariance matrix;
\item Blum and Roth \cite{RefWorks:397} use a hash function that implicitly
represents the JL transform, and the method is suitable for a sparse
query; 
\item Upadhyay \cite{RefWorks:398,RefWorks:399} uses a multiplicative
combination of random matrices to provide a JL transform that is applicable
to any matrix-valued query function whose singular values are all
above a threshold.
\end{itemize}

Among these methods, Upadhyay's works \cite{RefWorks:398,RefWorks:399}
stand out as possibly the most general. In our experiments, we show
that our approach can yield higher utility for the same privacy budget
than these methods.

\subsubsection{Exponential Mechanism.}

The \emph{Exponential mechanism} uses noise introduced via the sampling
process \cite{RefWorks:192}. It draws its query answers from a custom
distribution designed to preserve $\epsilon$-differential privacy.
To provide reasonable utility, the distribution is chosen based on
the \emph{quality function}, which indicates the utility score of
each possible sample. Due to its generality, it has been utilized
for many types of query functions, including the matrix-valued query
functions. We experimentally compare our approach to the Exponential
mechanism, and show that, with slightly weaker privacy guarantee,
our method can yield significant utility improvement.

\paragraph{Summary}
Finally, we conclude that our method differs from the four existing basic mechanisms as follows. In contrast with the i.i.d. noise in the Laplace and Gaussian mechanisms, the MVG mechanism allows a non-i.i.d. noise (cf. Sec. \ref{sec:Directional-Noise}). As opposed to the multiplicative noise in the JL transform and the sampling noise in the Exponential mechanism, the MVG mechanism uses an additive noise for matrix-valued query functions.

\subsection{Derived Mechanisms}

Derived mechanisms \textendash{} also referred to as ``revised algorithms'' by Blocki et al. \cite{RefWorks:313}
\textendash{} are those whose privacy guarantee is deduced from other
basic mechanisms via the composition theorems and the post-processing
invariance property \cite{RefWorks:185}. Derived mechanisms are often
designed to provide better utility by exploiting some properties of
the query function or the data.

The general techniques used by derived mechanisms are often translatable among basic mechanisms, including our MVG mechanism. Given our focus on a novel basic mechanism, these techniques are less relevant to our work, and we leave the investigation of integrating them into the MVG framework in the future work, and some of the popular techniques used by derived mechanisms are summarized here.

\subsubsection{Sensitivity Control.}

This technique avoids the worst-case sensitivity in basic mechanisms
by using variant concepts of sensitivity. Examples include the smooth
sensitivity \cite{RefWorks:193} and elastic sensitivity \cite{RefWorks:401}.

\subsubsection{Data Partition and Aggregation.}

This technique uses data partition and aggregation to produce more
accurate query answers \cite{RefWorks:224,RefWorks:241,RefWorks:242,RefWorks:243,RefWorks:244,RefWorks:245,RefWorks:246,RefWorks:247,RefWorks:248,RefWorks:219}.
The partition and aggregation processes are done in a differentially-private
manner either via the composition theorems and the post-processing
invariance property \cite{RefWorks:185}, or with a small extra privacy
cost. Hay et al. \cite{RefWorks:253} nicely summarize many works that utilize this concept.

\subsubsection{Non-uniform Data Weighting.}

This technique lowers the level of perturbation required for the privacy protection by \emph{weighting each data sample or dataset differently} \cite{RefWorks:187,RefWorks:221,RefWorks:236,RefWorks:237}.
The rationale is that each sample in a dataset, or each instance
of the dataset itself, has a heterogeneous contribution to the query output.
Therefore, these mechanisms place a higher weight on the
critical samples or instances of the database to provide better utility.

\subsubsection{Data Compression.}

This approach reduces the level of perturbation required for differential
privacy via \emph{dimensionality reduction}. Various dimensionality reduction methods have been proposed. For example, Kenthapadi et al. \cite{RefWorks:314}, Xu et al. \cite{RefWorks:400}, Li et al. \cite{RefWorks:337} and Chanyaswad et al. \cite{pca-gauss} use random projection; Jiang et al. \cite{RefWorks:339} use
principal component analysis (PCA) and linear discriminant analysis (LDA); Xiao et al. \cite{RefWorks:224}
use wavelet transform; and Acs et al. \cite{RefWorks:219} use lossy
Fourier transform.

\subsubsection{Adaptive Queries.}

These methods use past/auxiliary information to improve the utility
of the query answers. Examples are the matrix mechanism \cite{RefWorks:316,RefWorks:317},
the multiplicative weights mechanism \cite{RefWorks:236,RefWorks:221},
the low-rank mechanism \cite{RefWorks:318,RefWorks:539}, correlated noise \cite{RefWorks:235,RefWorks:538}, least-square estimation \cite{RefWorks:235}, boosting \cite{RefWorks:175},
and the sparse vector technique \cite{RefWorks:220,RefWorks:319}. 
We also note that some of these adaptive methods can be used  
in the restricted case of matrix-valued query where the 
matrix-valued query can be decomposed into multiple \emph{linear} 
vector-valued queries~\cite{RefWorks:235,RefWorks:537,RefWorks:539,yuan2016convex,nikolov2015improved,RefWorks:236}. However, such as an 
approach does not generalize for \emph{arbitrary} matrix-valued queries.

\paragraph{Summary}
We conclude with the following three main observations. (a) First, the MVG
mechanism falls into the category of basic mechanism. (b) Second, techniques
used in derived mechanisms are generally applicable to multiple basic
mechanisms, including our novel MVG mechanism. 
(c) Finally, therefore, for
fair comparison, we will compare the MVG mechanism with the four state-of-the-art
basic mechanisms presented in this section.

\section{Background}

\subsection{Matrix-Valued Query} \label{subsec:Matrix-Valued-Query}

We use the term dataset interchangeably with database, and represent
it with the data matrix $\mathbf{X}\in \mathbb{R}^{M\times N}$, whose columns are the $M$-dimensional vector samples/records. The matrix-valued query function,
$f(\mathbf{X})\in\mathbb{R}^{m\times n}$, has $m$ rows and $n$
columns \footnote{Note that we use the capital $M,N$ for the dimension of the dataset, but the small $m,n$ for the dimension of the query output.}. We define the notion of neighboring datasets $\{\mathbf{X}_{1},\mathbf{X}_{2}\}$
as two datasets that differ by a single record, and denote it as $d(\mathbf{X}_{1},\mathbf{X}_{2})=1$.
We note, however, that although the neighboring datasets differ by
only a single record, $f(\mathbf{X}_{1})$ and $f(\mathbf{X}_{2})$
may differ in every element.

We denote a matrix-valued random variable with the calligraphic font,
e.g. $\mathcal{Z}$, and its instance with the bold font, e.g. $\mathbf{Z}$.
Finally, as will become relevant later, we use the columns of $\mathbf{X}$
to denote the samples in the dataset.

\subsection{\texorpdfstring{$(\epsilon,\delta)$}{(epsilon,delta)}-Differential Privacy}

Differential privacy \cite{RefWorks:151,RefWorks:186} guarantees
that the involvement of any one particular record of the dataset would
not drastically change the query answer. 
\begin{defn}
\label{def:differential_privacy}A mechanism $\mathcal{A}$ on a query
function $f(\cdot)$ is $(\epsilon,\delta)$- differentially-private
if for all neighboring datasets $\{\mathbf{X}_{1},\mathbf{X}_{2}\}$,
and for all possible measurable matrix-valued outputs $\mathbf{S}\subseteq\mathbb{R}^{m\times n}$,
\[
\Pr[\mathcal{A}(f(\mathbf{X}_{1}))\in\mathbf{S}]\leq e^{\epsilon}\Pr[\mathcal{A}(f(\mathbf{X}_{2}))\in\mathbf{S}]+\delta.
\]
\end{defn}

\subsection{Matrix-Variate Gaussian Distribution}

One of our main innovations is the use of the noise drawn from a matrix-variate
probability distribution. Specifically, in the MVG mechanism,
the additive noise is drawn from the matrix-variate Gaussian distribution,
defined as follows \cite{RefWorks:279,RefWorks:280,RefWorks:281,RefWorks:282,RefWorks:284,RefWorks:367}. 
\begin{defn}
\label{def:tmvg_dist}An $m\times n$ matrix-valued random variable
$\mathcal{X}$ has a matrix-variate Gaussian distribution $\mathcal{MVG}_{m,n}(\mathbf{M},\boldsymbol{\Sigma},\boldsymbol{\Psi})$,
if it has the density function: 
\[
p_{\mathcal{X}}(\mathbf{X})=\frac{\exp\{-\frac{1}{2}\mathrm{tr}[\boldsymbol{\Psi}^{-1}(\mathbf{X}-\mathbf{M})^{T}\boldsymbol{\Sigma}^{-1}(\mathbf{X}-\mathbf{M})]\}}{(2\pi)^{mn/2}\left|\boldsymbol{\Psi}\right|^{m/2}\left|\boldsymbol{\Sigma}\right|^{n/2}},
\]
where $\mathrm{tr}(\cdot)$ is the matrix trace \cite{RefWorks:208},
$\left|\cdot\right|$ is the matrix determinant \cite{RefWorks:208},
$\mathbf{M}\in\mathbb{R}^{m\times n}$ is the mean, $\boldsymbol{\Sigma}\in\mathbb{R}^{m\times m}$
is the row-wise covariance, and $\boldsymbol{\Psi}\in\mathbb{R}^{n\times n}$
is the column-wise covariance. 
\end{defn}

Notably, the density function of $\mathcal{MVG}_{m,n}(\mathbf{M},\boldsymbol{\Sigma},\boldsymbol{\Psi})$
looks similar to that of the multivariate Gaussian, $\mathcal{N}_{m}(\boldsymbol{\mu},\boldsymbol{\Sigma})$.
Indeed, the matrix-variate Gaussian distribution $\mathcal{MVG}_{m,n}(\mathbf{M},\boldsymbol{\Sigma},\boldsymbol{\Psi})$
is a generalization of $\mathcal{N}_{m}(\boldsymbol{\mu},\boldsymbol{\Sigma})$
to a matrix-valued random variable. This leads to a few notable additions.
First, the mean vector $\boldsymbol{\mu}$ now becomes the mean matrix
$\mathbf{M}$. Second, in addition to the traditional row-wise covariance
matrix $\boldsymbol{\Sigma}$, there is also the column-wise covariance
matrix $\boldsymbol{\Psi}$. The latter is due to the fact that, not
only could the rows of the matrix be distributed non-uniformly, but
also could its columns.

We may intuitively explain this addition as follows. If we draw $n$
i.i.d. samples from $\mathcal{N}_{m}(\mathbf{0},\boldsymbol{\Sigma})$
denoted as $\mathbf{y}_{1},\ldots,\mathbf{y}_{n}\in\mathbb{R}^{m}$,
and concatenate them into a matrix $\mathbf{Y}=[\mathbf{y}_{1},\ldots,\mathbf{y}_{n}]\in\mathbb{R}^{m\times n}$,
then, it can be shown that $\mathbf{Y}$ is drawn from $\mathcal{MVG}_{m,n}(\mathbf{0},\boldsymbol{\Sigma},\mathbf{I})$,
where $\mathbf{I}$ is the identity matrix \cite{RefWorks:279}. However,
if we consider the case when the columns of $\mathbf{Y}$ are not
i.i.d., and are distributed with the covariance $\boldsymbol{\Psi}$
instead, then, it can be shown that this is distributed according
to $\mathcal{MVG}_{m,n}(\mathbf{0},\boldsymbol{\Sigma},\boldsymbol{\Psi})$
\cite{RefWorks:279}.

\subsection{Relevant Matrix Algebra Theorems}

We recite two major theorems in linear algebra that are essential
to the subsequent analysis. The first one is used in multiple parts of the analysis including the privacy proof and the interpretation of the results, while the second one is the concentration bound essential to the privacy proof. 
\begin{thm}[Singular value decomposition (SVD) \cite{RefWorks:208}]
\label{thm:svd}A matrix $\mathbf{A}\in\mathbb{R}^{m\times n}$ can
be decomposed as $\mathbf{A}=\mathbf{W}_{1}\boldsymbol{\Lambda}\mathbf{W}_{2}^{T}$,
where $\mathbf{W}_{1}\in\mathbb{R}^{m\times m},\mathbf{W}_{2}\in\mathbb{R}^{n\times n}$
are unitary, and $\boldsymbol{\Lambda}$ is a diagonal matrix whose
diagonal elements are the ordered \emph{singular values} of $\mathbf{A}$,
denoted as $\sigma_{1}\geq\sigma_{2}\geq\cdots\geq0$. 
\end{thm}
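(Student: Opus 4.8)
The plan is to reduce the existence of the SVD to the spectral theorem for real symmetric matrices, applied to the Gram matrix $\mathbf{A}^{T}\mathbf{A}$. First I would observe that $\mathbf{A}^{T}\mathbf{A}\in\mathbb{R}^{n\times n}$ is symmetric and positive semidefinite, since $\mathbf{x}^{T}\mathbf{A}^{T}\mathbf{A}\mathbf{x}=\|\mathbf{A}\mathbf{x}\|_{2}^{2}\geq 0$ for every $\mathbf{x}$. The spectral theorem then yields an orthonormal basis of eigenvectors $\mathbf{v}_{1},\ldots,\mathbf{v}_{n}$ with nonnegative eigenvalues, which I would order as $\lambda_{1}\geq\lambda_{2}\geq\cdots\geq\lambda_{n}\geq 0$; collecting the $\mathbf{v}_{i}$ as columns gives the unitary matrix $\mathbf{W}_{2}$. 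Defining the singular values by $\sigma_{i}=\sqrt{\lambda_{i}}$ produces exactly the ordered diagonal entries required by the statement.

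The second step constructs the left factor $\mathbf{W}_{1}$. For each index $i$ with $\sigma_{i}>0$ I would set $\mathbf{u}_{i}=\mathbf{A}\mathbf{v}_{i}/\sigma_{i}$. A short computation using the eigenvector relation $\mathbf{A}^{T}\mathbf{A}\mathbf{v}_{i}=\sigma_{i}^{2}\mathbf{v}_{i}$ shows that $\mathbf{u}_{i}^{T}\mathbf{u}_{j}=\mathbf{v}_{i}^{T}\mathbf{A}^{T}\mathbf{A}\mathbf{v}_{j}/(\sigma_{i}\sigma_{j})=\delta_{ij}$, so these vectors are orthonormal. I would then extend $\{\mathbf{u}_{i}\}$ to a full orthonormal basis of $\mathbb{R}^{m}$ (for instance by Gram--Schmidt on any completion), and assemble these basis vectors as the columns of the unitary matrix $\mathbf{W}_{1}$. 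Finally I would verify the factorization $\mathbf{A}=\mathbf{W}_{1}\boldsymbol{\Lambda}\mathbf{W}_{2}^{T}$ by checking its action on each $\mathbf{v}_{j}$: for $\sigma_{j}>0$ we obtain $\mathbf{A}\mathbf{v}_{j}=\sigma_{j}\mathbf{u}_{j}$, while for $\sigma_{j}=0$ we have $\|\mathbf{A}\mathbf{v}_{j}\|_{2}^{2}=\lambda_{j}=0$ so $\mathbf{A}\mathbf{v}_{j}=\mathbf{0}$, and both cases match the right-hand side column by column.

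The step I expect to require the most care is the rank-deficient case in which some $\sigma_{i}=0$. For those indices the expression $\mathbf{A}\mathbf{v}_{i}/\sigma_{i}$ is undefined, so the corresponding left singular vectors are not determined by $\mathbf{A}$ and must instead be supplied by the basis-completion step; one must confirm that this completion is always possible, i.e. that the number of strictly positive singular values equals $\operatorname{rank}(\mathbf{A})\leq\min(m,n)\leq m$, and that the arbitrary choice of completing vectors cannot affect the product (it cannot, since those columns of $\mathbf{W}_{1}$ are multiplied by zero entries of $\boldsymbol{\Lambda}$). A secondary bookkeeping point is that for a rectangular $\mathbf{A}$ the factor $\boldsymbol{\Lambda}\in\mathbb{R}^{m\times n}$ is rectangular-diagonal rather than square, so ``diagonal'' here means $\Lambda_{ij}=0$ whenever $i\neq j$; tracking the dimensions $m$ versus $n$ consistently through the construction is the main source of potential error.
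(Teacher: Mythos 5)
The paper offers no proof of this statement: it is quoted as a standard result and attributed to the linear-algebra reference \cite{RefWorks:208}, so there is no internal argument to compare yours against. Your proposal is the classical and correct construction of the SVD --- spectral theorem applied to $\mathbf{A}^{T}\mathbf{A}$ to get $\mathbf{W}_{2}$ and the $\sigma_{i}=\sqrt{\lambda_{i}}$, left singular vectors $\mathbf{u}_{i}=\mathbf{A}\mathbf{v}_{i}/\sigma_{i}$ for $\sigma_{i}>0$, basis completion for the null directions, and column-by-column verification --- and you correctly handle the two delicate points (the rank-deficient case, where the completing columns of $\mathbf{W}_{1}$ are harmless because they meet zero entries of $\boldsymbol{\Lambda}$, and the fact that $\boldsymbol{\Lambda}$ is rectangular-diagonal when $m\neq n$). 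This is essentially the textbook proof the paper's citation points to, so nothing further is needed.
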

\begin{thm}[Laurent-Massart \cite{RefWorks:369}]
\label{thm:laurent_massart} For a matrix-variate random variable
$\mathcal{N}\sim\mathcal{MVG}_{m,n}(\mathbf{0},\mathbf{I}_{m},\mathbf{I}_{n})$,
$\delta\in[0,1]$, and $\zeta(\delta)=2\sqrt{-mn\ln\delta}-2\ln\delta+mn$,
\[
\Pr[\left\Vert \mathcal{N}\right\Vert _{F}^{2}\leq\zeta(\delta)^{2}]\geq1-\delta.
\]
\end{thm}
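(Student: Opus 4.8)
The plan is to collapse this matrix-valued statement onto a one-dimensional chi-squared tail bound and then invoke the quoted concentration inequality. First I would unpack the distribution of $\mathcal{N}$: with mean $\mathbf{0}$ and both covariances equal to the identity, the exponent in Definition~\ref{def:tmvg_dist} becomes $\mathrm{tr}[\mathcal{N}^{T}\mathcal{N}]=\|\mathcal{N}\|_{F}^{2}$ and the normalizing determinants are $1$, so the density factorizes into $mn$ independent standard Gaussians. Hence the entries $N_{ij}$ are i.i.d.\ $\mathcal{N}(0,1)$, and $\|\mathcal{N}\|_{F}^{2}=\sum_{i,j}N_{ij}^{2}$ is a chi-squared random variable $X$ with $mn$ degrees of freedom. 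This reduces the claim to a scalar upper-tail estimate for $X$.

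Next I would establish the chi-squared upper tail in the Laurent--Massart form $\Pr[X\ge mn+2\sqrt{mn\,t}+2t]\le e^{-t}$ for every $t\ge0$. The natural route is a Chernoff argument: from the moment generating function $\mathbb{E}[e^{\lambda X}]=(1-2\lambda)^{-mn/2}$ (valid for $\lambda<\tfrac12$), Markov's inequality gives $\Pr[X\ge a]\le e^{-\lambda a}(1-2\lambda)^{-mn/2}$, and optimizing over $\lambda$ followed by the standard logarithmic estimates converts this raw bound into the clean additive threshold $2\sqrt{mn\,t}+2t$. Since this is exactly the quoted Laurent--Massart inequality, in the write-up I would apply it directly with $d=mn$ rather than redo the optimization.

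Finally I would match parameters and pass from the threshold to its square. Choosing $t=-\ln\delta$, which is nonnegative because $\delta\in[0,1]$ and satisfies $e^{-t}=\delta$, makes the threshold equal to $mn+2\sqrt{-mn\ln\delta}-2\ln\delta=\zeta(\delta)$, so $\Pr[\|\mathcal{N}\|_{F}^{2}\ge\zeta(\delta)]\le\delta$ and therefore $\Pr[\|\mathcal{N}\|_{F}^{2}\le\zeta(\delta)]\ge1-\delta$. Because the two leading terms of $\zeta(\delta)$ are nonnegative, $\zeta(\delta)\ge mn\ge1$, whence $\zeta(\delta)^{2}\ge\zeta(\delta)$; the event $\{\|\mathcal{N}\|_{F}^{2}\le\zeta(\delta)\}$ is thus contained in $\{\|\mathcal{N}\|_{F}^{2}\le\zeta(\delta)^{2}\}$, giving $\Pr[\|\mathcal{N}\|_{F}^{2}\le\zeta(\delta)^{2}]\ge\Pr[\|\mathcal{N}\|_{F}^{2}\le\zeta(\delta)]\ge1-\delta$, as stated.

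The genuinely hard part lives inside the second step --- turning the raw Chernoff bound into the tidy $2\sqrt{dt}+2t$ form requires the nontrivial logarithmic inequalities underpinning the Laurent--Massart lemma. Because that result is quoted as a black box, the only substantive work remaining is the distributional reduction of the first step and the elementary monotonicity relaxation from $\zeta(\delta)$ to $\zeta(\delta)^{2}$ in the last; the former is where I would focus the verification, since it is what ties the matrix-variate setup to the scalar inequality.
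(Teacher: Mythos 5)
Your proposal is correct. Note that the paper offers no proof of this statement at all---it is quoted as a black-box citation of Laurent--Massart---so there is no in-paper argument to compare against; your derivation is the natural reconstruction from the cited source. The one genuinely substantive point in your write-up is the final monotonicity step, and you handled it correctly: the standard Laurent--Massart bound for $\chi^{2}_{mn}$ with $t=-\ln\delta$ gives $\Pr[\|\mathcal{N}\|_{F}^{2}\leq\zeta(\delta)]\geq 1-\delta$, i.e.\ a threshold of $\zeta(\delta)$ rather than $\zeta(\delta)^{2}$ on the \emph{squared} norm, whereas the paper's statement (equivalently $\|\mathcal{N}\|_{F}\leq\zeta(\delta)$ with probability $\geq 1-\delta$, which is the form its appendix proof actually invokes) is the weaker consequence obtained exactly as you argue, from $\zeta(\delta)\geq mn\geq 1$ and hence $\zeta(\delta)^{2}\geq\zeta(\delta)$. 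Your distributional reduction is also sound: with zero mean and both covariances equal to the identity, the density in Definition~\ref{def:tmvg_dist} factorizes into $mn$ i.i.d.\ standard normals, so $\|\mathcal{N}\|_{F}^{2}$ is indeed chi-squared with $mn$ degrees of freedom, and the scalar tail bound applies.
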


\section{MVG Mechanism: Differential Privacy with Matrix-Valued Query}

Matrix-valued query functions are different from their scalar counterparts
in terms of the vital information contained in how the elements are
arranged in the matrix. To fully exploit these structural characteristics
of matrix-valued query functions, we present our novel mechanism for
matrix-valued query functions: the \emph{Matrix-Variate Gaussian (MVG)
mechanism}.

First, let us introduce the sensitivity of the matrix-valued query
function used in the MVG mechanism.
\begin{defn}[Sensitivity]
\label{def:sensitivity}Given a matrix-valued query function $f(\mathbf{X})\in\mathbb{R}^{m\times n}$,
define the $L_{2}$-sensitivity as, 
\[
s_{2}(f)=\sup_{d(\mathbf{X}_{1},\mathbf{X}_{2})=1}\left\Vert f(\mathbf{X}_{1})-f(\mathbf{X}_{2})\right\Vert _{F},
\]
where $\left\Vert \cdot\right\Vert _{F}$ is the Frobenius norm \cite{RefWorks:208}. 
\end{defn}
Then, we present the MVG mechanism as follows. 
\begin{defn}[MVG mechanism]
\label{def:mvg_mech}Given a matrix-valued query function $f(\mathbf{X})\in\mathbb{R}^{m\times n}$,
and a matrix-valued random variable $\mathcal{Z}\sim\mathcal{MVG}_{m,n}(\mathbf{0},\boldsymbol{\Sigma},\boldsymbol{\Psi})$,
the \emph{MVG mechanism} is defined as, 
\[
\mathcal{MVG}(f(\mathbf{X}))=f(\mathbf{X})+\mathcal{Z},
\]
where $\boldsymbol{\Sigma}$ is the row-wise covariance matrix, and
$\boldsymbol{\Psi}$ is the column-wise covariance matrix. 
\end{defn}
So far, we have not specified how to pick $\boldsymbol{\Sigma}$ and
$\boldsymbol{\Psi}$ according to the sensitivity $s_{2}(f)$ in the
MVG mechanism. We discuss the explicit form of $\boldsymbol{\Sigma}$
and $\boldsymbol{\Psi}$ next.

\begin{table}
\begin{centering}
\begin{tabular}{|>{\centering}p{2cm}|>{\centering}p{6cm}|}
\hline 
{$\mathbf{X}\in \mathbb{R}^{M\times N}$} & {database/dataset whose $N$ columns are data samples and $M$ 
rows are attributes/features.}\tabularnewline
\hline 
{\footnotesize{}$\mathcal{MVG}_{m,n}(\mathbf{0},\boldsymbol{\Sigma},\boldsymbol{\Psi})$} & { $m\times n$ matrix-variate Gaussian distribution
with zero mean, the row-wise covariance $\boldsymbol{\Sigma}$, and the column-wise
covariance $\boldsymbol{\Psi}$.}\tabularnewline
\hline 
{$f(\mathbf{X})\in\mathbb{R}^{m\times n}$}  & {matrix-valued query function}\tabularnewline
\hline 
 {$r$}  & {$\min\{m,n\}$}\tabularnewline
\hline 
{$H_{r}$}  & {generalized harmonic numbers of order $r$}\tabularnewline
\hline 
{$H_{r,1/2}$}  & {generalized harmonic numbers of order $r$ of $1/2$}\tabularnewline
\hline 
{$\gamma$}  & {$\sup_{\mathbf{X}}\left\Vert f(\mathbf{X})\right\Vert _{F}$}\tabularnewline
\hline 
{$\zeta(\delta)$}  & {$2\sqrt{-mn\ln\delta}-2\ln\delta+mn$}\tabularnewline
\hline 
{$\boldsymbol{\sigma}(\boldsymbol{\Sigma}^{-1})$}  & {vector of non-increasing singular values of $\boldsymbol{\Sigma}^{-1}$ }\tabularnewline
\hline 
{$\boldsymbol{\sigma}(\boldsymbol{\Psi}^{-1})$}  & {vector of non-increasing singular values of $\boldsymbol{\Psi}^{-1}$ }\tabularnewline
\hline 
\end{tabular}
\par\end{centering}
\caption{Notations for the differential privacy analysis.} \label{tab:Notations}
 
\end{table}

As the additive matrix-valued noise of the MVG mechanism is drawn
from $\mathcal{MVG}_{m,n}(\mathbf{0},\boldsymbol{\Sigma},\boldsymbol{\Psi})$,
the parameters to be designed for the MVG mechanism are the two covariance
matrices $\boldsymbol{\Sigma}$ and $\boldsymbol{\Psi}$. The following
theorem presents a sufficient condition for the values of $\boldsymbol{\Sigma}$
and $\boldsymbol{\Psi}$ to ensure that the MVG mechanism preserves
$(\epsilon,\delta)$-differential privacy. 

\begin{thm}
\label{thm:design_general} Let 

\[
\boldsymbol{\sigma}(\boldsymbol{\Sigma}^{-1})=[\sigma_{1}(\boldsymbol{\Sigma}^{-1}),\ldots,\sigma_{m}(\boldsymbol{\Sigma}^{-1})]^{T},
\]
and 
\[
\boldsymbol{\sigma}(\boldsymbol{\Psi}^{-1})=[\sigma_{1}(\boldsymbol{\Psi}^{-1}),\ldots,\sigma_{n}(\mathbf{\boldsymbol{\Psi}}^{-1})]^{T}
\]
be the vectors of non-increasingly ordered singular values of $\boldsymbol{\Sigma}^{-1}$
and $\boldsymbol{\Psi}^{-1}$, respectively, and let the relevant
variables be defined according to Table \ref{tab:Notations}. Then,
the MVG mechanism guarantees $(\epsilon,\delta)$-differential privacy
if $\boldsymbol{\Sigma}$ and $\boldsymbol{\Psi}$ satisfy the following
condition,\footnote{Note that the dependence on $\delta$ is via $\zeta(\delta)$ in $\beta$.}
\begin{equation}
\left\Vert \boldsymbol{\sigma}(\boldsymbol{\Sigma}^{-1})\right\Vert _{2}\left\Vert \boldsymbol{\sigma}(\boldsymbol{\Psi}^{-1})\right\Vert _{2}\leq\frac{(-\beta+\sqrt{\beta^{2}+8\alpha\epsilon})^{2}}{4\alpha^{2}},\label{eq.sufficient_condition}
\end{equation}
where $\alpha=[H_{r}+H_{r,1/2}]\gamma^{2}+2H_{r}\gamma s_{2}(f)$,
and $\beta=2(mn)^{1/4}H_{r}s_{2}(f)\zeta(\delta)$. 
\end{thm}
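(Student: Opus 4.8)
The plan is to control the \emph{privacy loss random variable} and then invoke the standard ``except-with-probability-$\delta$'' sufficient condition for $(\epsilon,\delta)$-differential privacy: it suffices to show that, for every pair of neighboring datasets and with the output drawn from the mechanism run on $\mathbf{X}_1$, the log-density-ratio of $\mathbf{X}_1$ over $\mathbf{X}_2$ exceeds $\epsilon$ with probability at most $\delta$. Writing the output as $\mathbf{Y}=f(\mathbf{X}_1)+\mathcal{Z}$ and using the density of Definition~\ref{def:tmvg_dist}, the normalizers $(2\pi)^{mn/2}|\boldsymbol{\Psi}|^{m/2}|\boldsymbol{\Sigma}|^{n/2}$ cancel because $\boldsymbol{\Sigma},\boldsymbol{\Psi}$ are shared across the two datasets, so the privacy loss is a difference of two trace-quadratic forms. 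Setting $\boldsymbol{\Delta}=f(\mathbf{X}_1)-f(\mathbf{X}_2)$ and expanding $(\mathcal{Z}+\boldsymbol{\Delta})$ against $\mathcal{Z}$ gives
\[
\mathcal{L}=\tfrac12\,\mathrm{tr}[\boldsymbol{\Psi}^{-1}\boldsymbol{\Delta}^{T}\boldsymbol{\Sigma}^{-1}\boldsymbol{\Delta}]+\mathrm{tr}[\boldsymbol{\Psi}^{-1}\boldsymbol{\Delta}^{T}\boldsymbol{\Sigma}^{-1}\mathcal{Z}],
\]
a deterministic quadratic term plus a term linear in the noise.

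Next I would \emph{whiten} the noise. By the affine-transformation property of the matrix-variate Gaussian, $\mathcal{Z}=\boldsymbol{\Sigma}^{1/2}\mathcal{N}\boldsymbol{\Psi}^{1/2}$ with $\mathcal{N}\sim\mathcal{MVG}_{m,n}(\mathbf{0},\mathbf{I}_{m},\mathbf{I}_{n})$. Cyclic invariance of the trace then collapses the linear term to the Frobenius inner product $\langle\boldsymbol{\Sigma}^{-1/2}\boldsymbol{\Delta}\boldsymbol{\Psi}^{-1/2},\mathcal{N}\rangle_{F}$, while the quadratic term equals $\tfrac12 R^{2}$ with $R=\|\boldsymbol{\Sigma}^{-1/2}\boldsymbol{\Delta}\boldsymbol{\Psi}^{-1/2}\|_{F}$. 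Cauchy--Schwarz on the inner product yields the pointwise bound $\mathcal{L}\le\tfrac12 R^{2}+R\,\|\mathcal{N}\|_{F}$, valid for either ordering of the neighbors (so both privacy directions are covered by the same estimate). Invoking Theorem~\ref{thm:laurent_massart} on $\mathcal{N}$, the event $\{\|\mathcal{N}\|_{F}\le\zeta(\delta)\}$ has probability at least $1-\delta$, and on this event $\mathcal{L}\le\tfrac12 R^{2}+R\,\zeta(\delta)$.

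It then remains to bound $R$ and $R^{2}$ by the design quantity $\Phi:=\|\boldsymbol{\sigma}(\boldsymbol{\Sigma}^{-1})\|_{2}\,\|\boldsymbol{\sigma}(\boldsymbol{\Psi}^{-1})\|_{2}$ (recall $\|\boldsymbol{\sigma}(\boldsymbol{\Sigma}^{-1})\|_{2}=\|\boldsymbol{\Sigma}^{-1}\|_{F}$, and similarly for $\boldsymbol{\Psi}^{-1}$). I would apply von Neumann's trace inequality to $\mathrm{tr}[\boldsymbol{\Psi}^{-1}\boldsymbol{\Delta}^{T}\boldsymbol{\Sigma}^{-1}\boldsymbol{\Delta}]$ together with the elementary per-index estimate $\sigma_{i}(\mathbf{M})\le\|\mathbf{M}\|_{F}/\sqrt{i}$ (immediate since the singular values are non-increasing, cf.\ Theorem~\ref{thm:svd}), applied to the query-related factors. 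Using $\|\boldsymbol{\Delta}\|_{F}\le s_{2}(f)$ from Definition~\ref{def:sensitivity} together with $\|f(\mathbf{X})\|_{F}\le\gamma$ to bound the singular values of $\boldsymbol{\Delta}=f(\mathbf{X}_1)-f(\mathbf{X}_2)$, the index sums assemble into the generalized harmonic numbers $H_{r}=\sum_{i\le r}1/i$ and $H_{r,1/2}=\sum_{i\le r}1/\sqrt{i}$, producing bounds of the form $R^{2}\le\alpha\Phi$ and $R\le(mn)^{1/4}H_{r}s_{2}(f)\sqrt{\Phi}=\tfrac{\beta}{2\zeta(\delta)}\sqrt{\Phi}$. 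Substituting gives $\mathcal{L}\le\tfrac12\alpha\Phi+\tfrac12\beta\sqrt{\Phi}$ on the good event, so $\mathcal{L}\le\epsilon$ is guaranteed as soon as $\alpha\Phi+\beta\sqrt{\Phi}\le 2\epsilon$; solving this quadratic inequality in $\sqrt{\Phi}$ reproduces exactly the threshold in~\eqref{eq.sufficient_condition}. Finally, since $\Pr[\mathcal{L}>\epsilon]\le\Pr[\|\mathcal{N}\|_{F}>\zeta(\delta)]\le\delta$, the sufficient condition for $(\epsilon,\delta)$-differential privacy (Definition~\ref{def:differential_privacy}) is met.

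I expect the main obstacle to be precisely the singular-value bookkeeping in the last step: converting the two matrix trace forms into scalar bounds of the exact shapes $\alpha\Phi$ and $(mn)^{1/4}H_{r}s_{2}(f)\sqrt{\Phi}$ requires choosing the von Neumann pairing and the $1/\sqrt{i}$ estimate so that the index sums close up into $H_{r}$ and $H_{r,1/2}$ with the correct dimension factor $(mn)^{1/4}$, and it is where the mixed $\gamma^{2}$ and $\gamma s_{2}(f)$ contributions to $\alpha$ must be tracked with care (in particular, using both $\|\boldsymbol{\Delta}\|_{F}\le s_{2}(f)$ and $\|f(\mathbf{X})\|_{F}\le\gamma$ on the appropriate singular values). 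By contrast, the whitening reduction, the Cauchy--Schwarz/Laurent--Massart concentration, and the final quadratic-inequality solve are routine.
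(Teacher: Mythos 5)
Your proposal is correct, but it takes a genuinely different route from the paper's proof at the decisive step. The paper expands the log-density-ratio in terms of the \emph{output} $\mathbf{Y}$, producing a four-term ``characteristic equation'' $\mathrm{tr}[\boldsymbol{\Psi}^{-1}\mathcal{Y}^{T}\boldsymbol{\Sigma}^{-1}\boldsymbol{\Delta}+\boldsymbol{\Psi}^{-1}\boldsymbol{\Delta}^{T}\boldsymbol{\Sigma}^{-1}\mathcal{Y}+\boldsymbol{\Psi}^{-1}f(\mathbf{X}_{2})^{T}\boldsymbol{\Sigma}^{-1}f(\mathbf{X}_{2})-\boldsymbol{\Psi}^{-1}f(\mathbf{X}_{1})^{T}\boldsymbol{\Sigma}^{-1}f(\mathbf{X}_{1})]\leq 2\epsilon$, and bounds each term separately via von Neumann, the Merikoski--Sarria--Tarazaga estimate, and the trace-magnitude bound; this term-by-term treatment is exactly where $\gamma=\sup_{\mathbf{X}}\|f(\mathbf{X})\|_{F}$ and $H_{r,1/2}$ enter the constants $\alpha$ and $\beta$. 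You instead parametrize the output by the noise, $\mathbf{Y}=f(\mathbf{X}_{1})+\mathcal{Z}$, under which all dependence on $f(\mathbf{X}_{1})$ beyond $\boldsymbol{\Delta}$ cancels and the privacy loss collapses to $\tfrac12\|\boldsymbol{\Sigma}^{-1/2}\boldsymbol{\Delta}\boldsymbol{\Psi}^{-1/2}\|_{F}^{2}+\langle\boldsymbol{\Sigma}^{-1/2}\boldsymbol{\Delta}\boldsymbol{\Psi}^{-1/2},\mathcal{N}\rangle_{F}$, with Cauchy--Schwarz replacing von Neumann on the noise term (the whitening, the Laurent--Massart concentration, and the closing quadratic solve in $\sqrt{\Phi}$ coincide with the paper's). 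Your route is both shorter and strictly tighter: your coefficients are $s_{2}(f)^{2}H_{r}$ and $2s_{2}(f)\sqrt{H_{r}}\,\zeta(\delta)$, which are dominated by the paper's $\alpha$ and $\beta$ because $s_{2}(f)\leq\|f(\mathbf{X}_{1})\|_{F}+\|f(\mathbf{X}_{2})\|_{F}\leq2\gamma$, $\sqrt{H_{r}}\leq H_{r}$, and $(mn)^{1/4}\geq1$, so the stated sufficient condition follows a fortiori. In effect you prove a stronger statement in which the boundedness assumption $\gamma<\infty$ plays no role in the privacy guarantee itself; the $\gamma$-dependence of the theorem's constants is an artifact of the paper's decomposition, not intrinsic to the mechanism. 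The one step you should make fully explicit when writing this up is precisely this domination argument (your sharper coefficients versus $\alpha,\beta$), since without it your tighter bound does not literally reproduce inequality \eqref{eq.sufficient_condition}; with it, the implication is immediate by monotonicity of $x\mapsto\alpha x^{2}+\beta x$ on $x\geq0$.
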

\begin{proof}
(Sketch) We only provide the sketch proof here. The full proof can
be found in Appendix \ref{sec:Full-Proof}. 

The MVG mechanism guarantees $(\epsilon,\delta)$-differential privacy
if for every pair of neighboring datasets $\{\mathbf{X}_{1},\mathbf{X}_{2}\}$
and all measurable sets $\mathbf{S}\subseteq\mathbb{R}^{m\times n}$,
\[
\Pr\left[f(\mathbf{X}_{1})+\mathcal{Z}\in\mathbf{S}\right]\leq\exp(\epsilon)\Pr\left[f(\mathbf{X}_{2})+\mathcal{Z}\in\mathbf{S}\right]+\delta.
\]
Using Definition \ref{def:tmvg_dist}, this is satisfied if we have,
\begin{align*}
\int_{\mathbf{S}}e^{-\frac{1}{2}\mathrm{tr}[\boldsymbol{\Psi}^{-1}(\mathbf{Y}-f(\mathbf{X}_{1}))^{T}\boldsymbol{\Sigma}^{-1}(\mathbf{Y}-f(\mathbf{X}_{1}))]}d\mathbf{Y}\leq\\
e^{\epsilon}\int_{\mathbf{S}}e^{-\frac{1}{2}\mathrm{tr}[\boldsymbol{\Psi}^{-1}(\mathbf{Y}-f(\mathbf{X}_{2}))^{T}\boldsymbol{\Sigma}^{-1}(\mathbf{Y}-f(\mathbf{X}_{2}))]}d\mathbf{Y}+\delta.
\end{align*}
By inserting $\frac{\exp\{-\frac{1}{2}\mathrm{tr}[\boldsymbol{\Psi}^{-1}(\mathbf{Y}-f(\mathbf{X}_{2}))^{T}\boldsymbol{\Sigma}^{-1}(\mathbf{Y}-f(\mathbf{X}_{2}))]\}}{\exp\{-\frac{1}{2}\mathrm{tr}[\boldsymbol{\Psi}^{-1}(\mathbf{Y}-f(\mathbf{X}_{2}))^{T}\boldsymbol{\Sigma}^{-1}(\mathbf{Y}-f(\mathbf{X}_{2}))]\}}$
inside the integral on the left side, it is sufficient to show that
\[
\frac{\exp\{-\frac{1}{2}\mathrm{tr}[\boldsymbol{\Psi}^{-1}(\mathcal{Y}-f(\mathbf{X}_{1}))^{T}\boldsymbol{\Sigma}^{-1}(\mathcal{Y}-f(\mathbf{X}_{1}))]\}}{\exp\{-\frac{1}{2}\mathrm{tr}[\boldsymbol{\Psi}^{-1}(\mathcal{Y}-f(\mathbf{X}_{2}))^{T}\boldsymbol{\Sigma}^{-1}(\mathcal{Y}-f(\mathbf{X}_{2}))]\}}\leq\exp(\epsilon),
\]
with probability $\geq1-\delta$. By algebraic manipulations, we can
express this condition as, 
\begin{align*}
\mathrm{tr}[\boldsymbol{\Psi}^{-1}\mathcal{Y}^{T}\boldsymbol{\Sigma}^{-1}\boldsymbol{\Delta}+\boldsymbol{\Psi}^{-1}\boldsymbol{\Delta}^{T}\boldsymbol{\Sigma}^{-1}\mathcal{Y}\\
+\boldsymbol{\Psi}^{-1}f(\mathbf{X}_{2})^{T}\boldsymbol{\Sigma}^{-1}f(\mathbf{X}_{2})-\boldsymbol{\Psi}^{-1}f(\mathbf{X}_{1})^{T}\boldsymbol{\Sigma}^{-1}f(\mathbf{X}_{1})] & \leq2\epsilon.
\end{align*}
where $\boldsymbol{\Delta}=f(\mathbf{X}_{1})-f(\mathbf{X}_{2})$.
This is the necessary condition that has to be satisfied for all neighboring
$\{\mathbf{X}_{1},\mathbf{X}_{2}\}$ with probability $\geq1-\delta$
for the MVG mechanism to guarantee $(\epsilon,\delta)$-differential
privacy. Therefore, we refer to it as the \emph{characteristic equation}.
From here, the proof analyzes the four terms in the sum separately
since the trace is additive. The analysis relies on the following
lemmas in linear algebra. 
\begin{lem}[Merikoski-Sarria-Tarazaga \cite{RefWorks:288}]
\label{lem:singular_bound} The non-increasingly ordered singular
values of a matrix $\mathbf{A}\in\mathbb{R}^{m\times n}$ have the
values of $0\leq\sigma_{i}\leq\left\Vert \mathbf{A}\right\Vert _{F}/\sqrt{i}$.
\end{lem}
\begin{lem}[von Neumann \cite{RefWorks:402}]
\label{lem:v_neumann}Let $\mathbf{A},\mathbf{B}\in\mathbb{R}^{m\times n}$;
$\sigma_{i}(\mathbf{A})$ and $\sigma_{i}(\mathbf{B})$ be the non-increasingly
ordered singular values of $\mathbf{A}$ and $\mathbf{B}$, respectively;
and $r=\min\{m,n\}$. Then, $\mathrm{tr}(\mathbf{A}\mathbf{B}^{T})\leq\Sigma_{i=1}^{r}\sigma_{i}(\mathbf{A})\sigma_{i}(\mathbf{B})$. 
\end{lem}
\begin{lem}[Trace magnitude bound \cite{RefWorks:278}]
\label{lem:abs_trace_bound}Let $\sigma_{i}(\mathbf{A})$ be the
non-increasingly ordered singular values of $\mathbf{A}\in\mathbb{R}^{m\times n}$,
and $r=\min\{m,n\}$. Then, $\left|\mathrm{tr}(\mathbf{A})\right|\leq\sum_{i=1}^{r}\sigma_{i}(\mathbf{A})$. 
\end{lem}
The proof, then, proceeds with the analysis of each term in the characteristic
equation as follows.

\emph{The first term}: $\mathrm{tr}[\boldsymbol{\Psi}^{-1}\mathcal{Y}^{T}\boldsymbol{\Sigma}^{-1}\boldsymbol{\Delta}]$.
Let us denote $\mathcal{Y}=f(\mathbf{X})+\mathcal{Z}$, where $f(\mathbf{X})$
and $\mathcal{Z}$ are any possible instances of the query and the
noise, respectively. Then, we can rewrite the first term as, $\mathrm{tr}[\boldsymbol{\Psi}^{-1}f(\mathbf{X})^{T}\boldsymbol{\Sigma}^{-1}\boldsymbol{\Delta}]+\mathrm{tr}[\boldsymbol{\Psi}^{-1}\mathcal{Z}^{T}\boldsymbol{\Sigma}^{-1}\boldsymbol{\Delta}]$.
Both parts are then bounded by their singular values via Lemma \ref{lem:v_neumann}.
The singular values are, in turn, bounded via Lemma \ref{lem:singular_bound}
and Theorem \ref{thm:laurent_massart} with probability $\geq1-\delta$.
This gives the bound for the first term: 
\[
\mathrm{tr}[\boldsymbol{\Psi}^{-1}\mathcal{Y}^{T}\boldsymbol{\Sigma}^{-1}\boldsymbol{\Delta}]\leq\gamma H_{r}s_{2}(f)\phi^{2}+(mn)^{1/4}\zeta(\delta)H_{r}s_{2}(f)\phi,
\]
where $\phi=(\left\Vert \boldsymbol{\sigma}(\boldsymbol{\Sigma}^{-1})\right\Vert _{2}\left\Vert \boldsymbol{\sigma}(\boldsymbol{\Psi}^{-1})\right\Vert _{2})^{1/2}$.

\emph{The second term}: $\mathrm{tr}[\boldsymbol{\Psi}^{-1}\boldsymbol{\Delta}^{T}\boldsymbol{\Sigma}^{-1}\mathcal{Y}]$.
By following the same steps as in the first term, the second term
has the exact same bound as the first term, i.e. 
\[
\mathrm{tr}[\boldsymbol{\Psi}^{-1}\boldsymbol{\Delta}^{T}\boldsymbol{\Sigma}^{-1}\mathcal{Y}]\leq\gamma H_{r}s_{2}(f)\phi^{2}+(mn)^{1/4}\zeta(\delta)H_{r}s_{2}(f)\phi.
\]

\emph{The third term}: $\mathrm{tr}[\boldsymbol{\Psi}^{-1}f(\mathbf{X}_{2})^{T}\boldsymbol{\Sigma}^{-1}f(\mathbf{X}_{2})]$.
Applying Lemma \ref{lem:v_neumann} and Lemma \ref{lem:singular_bound},
we can readily bound it as, 
\[
\mathrm{tr}[\boldsymbol{\Psi}^{-1}f(\mathbf{X}_{2})^{T}\boldsymbol{\Sigma}^{-1}f(\mathbf{X}_{2})]\leq\gamma^{2}H_{r}\phi^{2}.
\]

\emph{The fourth term}: $-\mathrm{tr}[\boldsymbol{\Psi}^{-1}f(\mathbf{X}_{1})^{T}\boldsymbol{\Sigma}^{-1}f(\mathbf{X}_{1})]$.
Since this term has the negative sign, we use Lemma \ref{lem:abs_trace_bound}
to bound its magnitude by its singular values. Then, we use Lemma
\ref{lem:singular_bound} to bound the singular values. This gives
the bound for the forth term as, 
\[
\left|\mathrm{tr}[\boldsymbol{\Psi}^{-1}f(\mathbf{X}_{1})^{T}\boldsymbol{\Sigma}^{-1}f(\mathbf{X}_{1})]\right|\leq\gamma^{2}H_{r,1/2}\phi^{2}.
\]

\emph{Four terms combined:} by combining the four terms and rearranging
them, the characteristic equation becomes $\alpha\phi^{2}+\beta\phi\leq2\epsilon$.
This is a quadratic equation, of which the solution is $\phi\in[\frac{-\beta-\sqrt{\beta^{2}+8\alpha\epsilon}}{2\alpha},\frac{-\beta+\sqrt{\beta^{2}+8\alpha\epsilon}}{2\alpha}]$.
Since we know $\phi>0$, we have the solution, 
\[
\phi\leq\frac{-\beta+\sqrt{\beta^{2}+8\alpha\epsilon}}{2\alpha},
\]
which implies the criterion in Theorem \ref{thm:design_general}. 
\end{proof}

\begin{figure}
\begin{centering}
\includegraphics[scale=0.4]{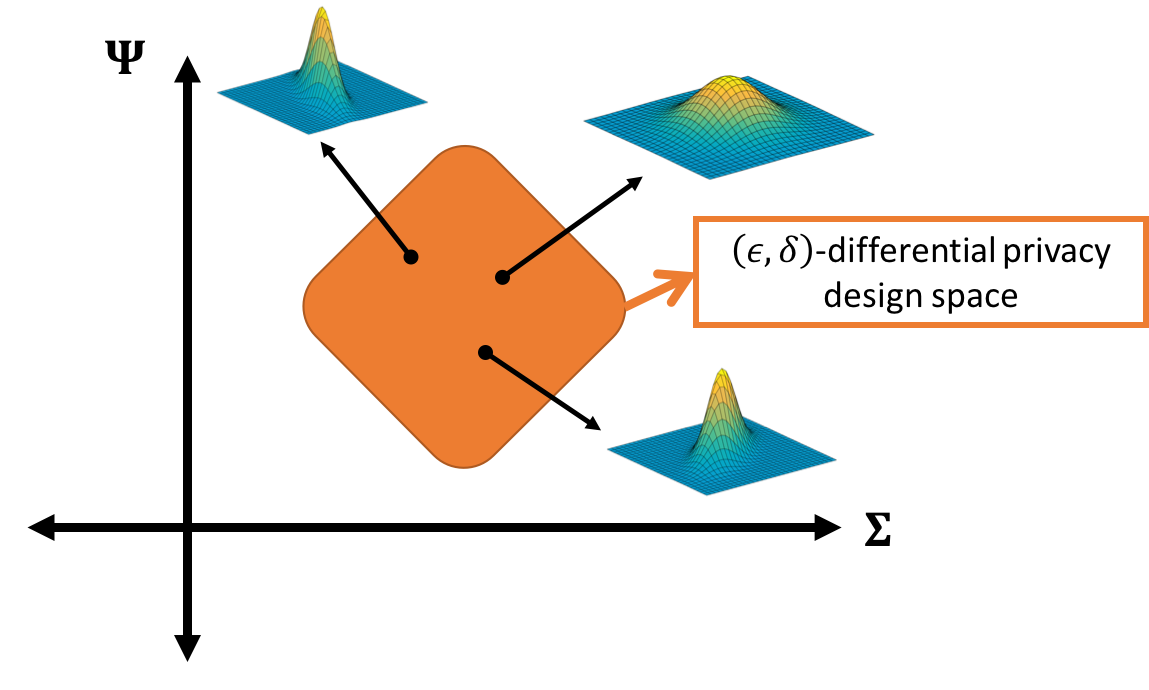} 
\par\end{centering}
\caption{A conceptual display of the MVG design space.
The illustration visualizes the design space coordinated by the two design parameters of $\mathcal{MVG}_{m,n}(\mathbf{0},\boldsymbol{\Sigma},\boldsymbol{\Psi})$. Each point on the space corresponds to an instance
of $\mathcal{MVG}_{m,n}(\mathbf{0},\boldsymbol{\Sigma},\boldsymbol{\Psi})$.
From this perspective, Theorem \ref{thm:design_general} suggests
that any instance of $\mathcal{MVG}_{m,n}(\mathbf{0},\boldsymbol{\Sigma},\boldsymbol{\Psi})$
in the (conceptual) shaded area would preserve $(\epsilon,\delta)$-differential
privacy. \label{fig:A-conceptual-display}}

\end{figure}

\begin{rem}
In Theorem \ref{thm:design_general}, we assume that the Frobenius
norm of the query function is bounded for all possible datasets by
$\gamma$. This assumption is valid in practice because real-world
data are rarely unbounded (cf. \cite{RefWorks:372}), and it is a
common assumption in the analysis of differential privacy for multi-dimensional
query functions (cf. \cite{RefWorks:195,RefWorks:178,RefWorks:338,RefWorks:249}). 
\end{rem}
\begin{rem}
The values of the generalized harmonic numbers \textendash{} $H_{r}$,
and $H_{r,1/2}$ \textendash{} can be obtained from the table lookup
for a given value of $r$, or can easily be computed recursively \cite{RefWorks:289}.
\end{rem}

The sufficient condition in Theorem \ref{thm:design_general} yields
an important observation: the privacy guarantee of the MVG mechanism
depends \emph{only on the singular values} of $\boldsymbol{\Sigma}$
and $\boldsymbol{\Psi}$. In other words, we may have multiple instances
of $\mathcal{MVG}_{m,n}(\mathbf{0},\boldsymbol{\Sigma},\boldsymbol{\Psi})$
that yield the exact same privacy guarantee (cf. Fig. \ref{fig:A-conceptual-display}).
To better understand this phenomenon, in the next section, we introduce the novel concept of \emph{directional noise}.

\section{Directional Noise} \label{sec:Directional-Noise}

Recall from Theorem \ref{thm:design_general} that the $(\epsilon,\delta)$-differential-privacy
condition for the MVG mechanism only applies to the singular values
of the two covariance matrices $\boldsymbol{\Sigma}$ and $\boldsymbol{\Psi}$.
Here, we investigate the ramification of this result via the
novel notion of \emph{directional noise}.

\subsection{Motivation for Non-i.i.d. Noise}

For a matrix-valued query function, the standard method for basic
mechanisms that use additive noise is by adding the \emph{independent
and identically distributed} (i.i.d.) noise to each element of the
matrix query. However, as common in matrix analysis \cite{RefWorks:208},
the matrices involved often have some geometric and algebraic characteristics
that can be exploited. As a result, it is usually the case that only
certain ``parts'' \textendash{} the term which will be defined more
precisely shortly \textendash{} of the matrices contain useful information.
In fact, this is one of the rationales behind many compression techniques
such as the popular principal component analysis (PCA) \cite{RefWorks:33,RefWorks:51,RefWorks:225}.
Due to this reason, adding the same amount of noise to every ``part''
of the matrix query may be highly suboptimal.

\subsection{Directional Noise as a Non-i.i.d. Noise}

\label{subsec:dir_noise_as_noniid}

\begin{figure}
\includegraphics[scale=0.28]{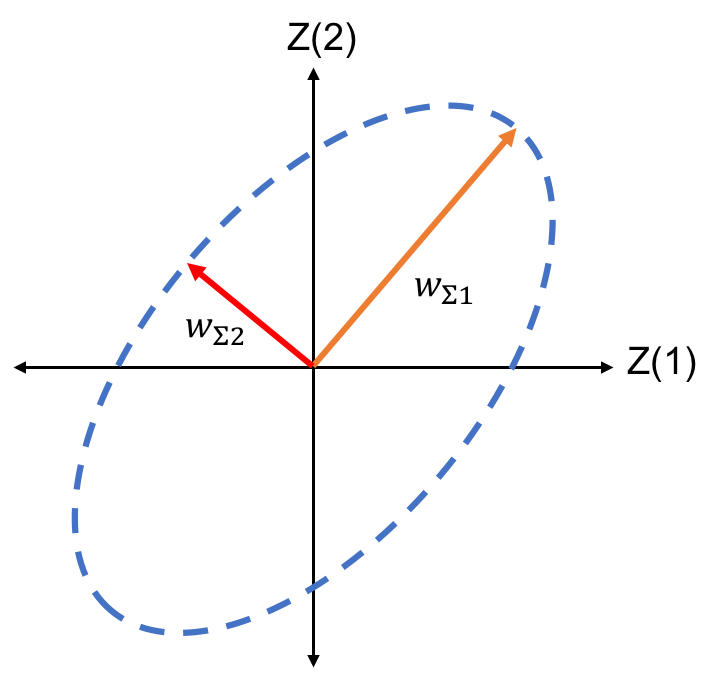}\includegraphics[scale=0.27]{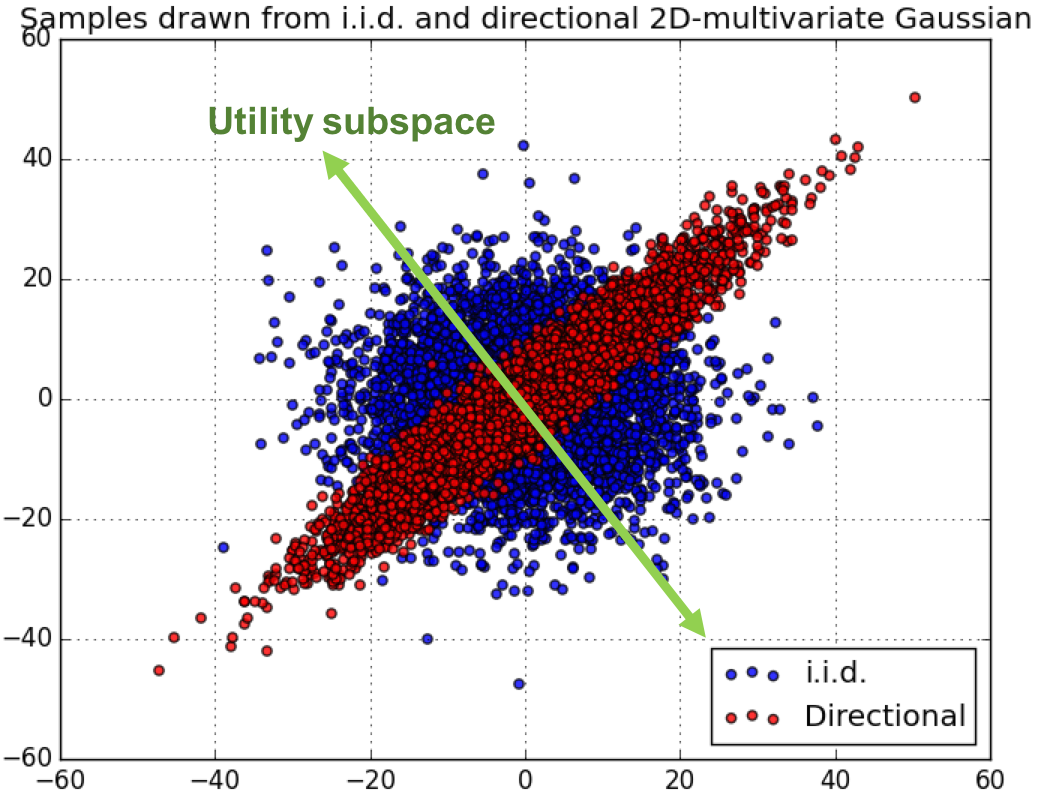}

\caption{(Left) An ellipsoid of equi-density contour of a 2D multivariate Gaussian
distribution. The two arrows indicate the principal axes of this ellipsoid.
(Right) Directional noise (red) and i.i.d. noise (blue) drawn from
a 2D-multivariate Gaussian distribution. The green line represents
a possible utility subspace that can benefit from this instance of
directional noise. \label{fig:dir_noise_samples}}

\end{figure}

Let us elaborate further on the ``parts'' of a matrix. In matrix
analysis, \emph{matrix factorization} \cite{RefWorks:208} is often
used to extract underlying properties of a matrix. This is a family
of algorithms and the specific choice depends upon the application
and types of insights it requires. Particularly, in our application,
the factorization that is informative is the singular value decomposition
(SVD) (Theorem \ref{thm:svd}) of the two covariance matrices of \emph{$\mathcal{MVG}_{m,n}(\mathbf{0},\boldsymbol{\Sigma},\boldsymbol{\Psi})$}.

Consider first the covariance matrix $\boldsymbol{\Sigma}\in\mathbb{R}^{m\times m}$,
and write its SVD as, $\boldsymbol{\Sigma}=\mathbf{W}_{1}\boldsymbol{\Lambda}\mathbf{W}_{2}^{T}$.
It is well-known that, for the covariance matrix, we have the equality
$\mathbf{W}_{1}=\mathbf{W}_{2}$ since it is positive definite (cf.
\cite{RefWorks:291,RefWorks:208}). Hence, let us more concisely write
the SVD of $\boldsymbol{\Sigma}$ as, 
\[
\boldsymbol{\Sigma}=\mathbf{W}_{\boldsymbol{\Sigma}}\boldsymbol{\Lambda}_{\boldsymbol{\Sigma}}\mathbf{W}_{\boldsymbol{\Sigma}}^{T}.
\]
This representation gives us a very useful insight to the noise generated
from \emph{$\mathcal{MVG}_{m,n}(\mathbf{0},\boldsymbol{\Sigma},\boldsymbol{\Psi})$}:
it tells us the \emph{directions} of the noise via the column vectors
of $\mathbf{W}_{\boldsymbol{\Sigma}}$, and \emph{variance} of the
noise in each direction via the singular values in $\boldsymbol{\Lambda}_{\boldsymbol{\Sigma}}$.

For simplicity, consider a two-dimensional multivariate Gaussian distribution,
i.e. $m=2$, so there are two column vectors of $\mathbf{W}_{\boldsymbol{\Sigma}}=[\mathbf{w}_{\boldsymbol{\Sigma}1},\mathbf{w}_{\boldsymbol{\Sigma}2}]$.
The geometry of this distribution can be depicted by an ellipsoid,
e.g. the dash contour in Fig. \ref{fig:dir_noise_samples}, Left (cf.
\cite[ch. 4]{RefWorks:51}, \cite[ch. 2]{RefWorks:225}). This ellipsoid
is characterized by its two principal axes \textendash{} the major
and the minor axes. It is well-known that the two column vectors from
SVD, i.e. $\mathbf{w}_{\boldsymbol{\Sigma}1}$ and $\mathbf{w}_{\boldsymbol{\Sigma}2}$,
are unit vectors pointing in the directions of the major and minor
axes of this ellipsoid, and more importantly, the length of each axis
is characterized by its corresponding singular value, i.e. $\sigma_{\boldsymbol{\Sigma}1}$
and $\sigma_{\boldsymbol{\Sigma}2}$, respectively (cf. \cite[ch. 4]{RefWorks:51})
(recall from Theorem \ref{thm:svd} that $diag(\boldsymbol{\Lambda}_{\boldsymbol{\Sigma}})=[\sigma_{\boldsymbol{\Sigma}1},\sigma_{\boldsymbol{\Sigma}2}]$).
This is illustrated by Fig. \ref{fig:dir_noise_samples}, Left. Therefore,
when considering this 2D multivariate Gaussian noise, we arrive at
the following interpretation of the SVD of its covariance matrix:
the noise is distributed toward the two principal \emph{directions}
specified by $\mathbf{w}_{\boldsymbol{\Sigma}1}$ and $\mathbf{w}_{\boldsymbol{\Sigma}2}$,
with the \emph{variance} scaled by the respective singular values,
$\sigma_{\boldsymbol{\Sigma}1}$ and $\sigma_{\boldsymbol{\Sigma}2}$.

We can extend this interpretation to a more general case with $m>2$,
and also to the other covariance matrix $\boldsymbol{\Psi}$. Then,
we have a full interpretation of \emph{$\mathcal{MVG}_{m,n}(\mathbf{0},\boldsymbol{\Sigma},\boldsymbol{\Psi})$}
as follows. The matrix-valued noise distributed according to \emph{$\mathcal{MVG}_{m,n}(\mathbf{0},\boldsymbol{\Sigma},\boldsymbol{\Psi})$}
has two components: the\emph{ row-wise noise}, and the \emph{column-wise
noise}. The row-wise noise and the column-wise noise are characterized
by the two covariance matrices, $\boldsymbol{\Sigma}$ and $\boldsymbol{\Psi}$,
respectively, as follows.

For the \emph{row-wise noise}.
\begin{itemize}
\item The row-wise noise is characterized by $\boldsymbol{\Sigma}$. 
\item SVD of $\boldsymbol{\Sigma}=\mathbf{W}_{\boldsymbol{\Sigma}}\boldsymbol{\Lambda}_{\boldsymbol{\Sigma}}\mathbf{W}_{\boldsymbol{\Sigma}}^{T}$
decomposes the row-wise noise into two components \textendash{} the
directions and the variances of the noise in those directions. 
\item The \emph{directions} of the row-wise noise are specified by the column
vectors of $\mathbf{W}_{\boldsymbol{\Sigma}}$. 
\item The \emph{variance} of each row-wise-noise direction is indicated
by its corresponding singular value in $\boldsymbol{\Lambda}_{\boldsymbol{\Sigma}}$. 
\end{itemize}

For the \emph{column-wise noise}.
\begin{itemize}
\item The column-wise noise is characterized by $\boldsymbol{\Psi}$. 
\item SVD of $\boldsymbol{\Psi}=\mathbf{W}_{\boldsymbol{\Psi}}\boldsymbol{\Lambda}_{\boldsymbol{\Psi}}\mathbf{W}_{\boldsymbol{\Psi}}^{T}$
decomposes the column-wise noise into two components \textendash{}
the directions and the variances of the noise in those directions. 
\item The \emph{directions} of the column-wise noise are specified by the
column vectors of $\mathbf{W}_{\boldsymbol{\Psi}}$. 
\item The \emph{variance} of each column-wise-noise direction is indicated
by its respective singular value in $\boldsymbol{\Lambda}_{\boldsymbol{\Psi}}$. 
\end{itemize}
Since \emph{$\mathcal{MVG}_{m,n}(\mathbf{0},\boldsymbol{\Sigma},\boldsymbol{\Psi})$}
is fully characterized by its covariances, these two components of
the matrix-valued noise drawn from \emph{$\mathcal{MVG}_{m,n}(\mathbf{0},\boldsymbol{\Sigma},\boldsymbol{\Psi})$}
provide a complete interpretation of the matrix-variate Gaussian noise.

\subsection{Directional Noise \& MVG Mechanism}  \label{subsec:Dir_noise_via_mvg}

We now revisit Theorem \ref{thm:design_general}. Recall that the
sufficient $(\epsilon,\delta)$-differential-privacy condition for
the MVG mechanism puts the constraint only on the singular values
of $\boldsymbol{\Sigma}$ and $\boldsymbol{\Psi}$. However, as we
discuss in the previous section, the singular values of $\boldsymbol{\Sigma}$
and $\boldsymbol{\Psi}$ only indicate the \emph{variance} of the
noise in each direction, but \emph{not the directions they are attributed
to}. In other words, Theorem \ref{thm:design_general} suggests that
the MVG mechanism preserves $(\epsilon,\delta)$-differential privacy
\emph{as long as the overall variances of the noise satisfy a certain
threshold, but these variances can be attributed non-uniformly in
any direction.}

This claim certainly warrants further discussion, and we defer it
to Sec. \ref{sec:Practical-Implementation}, where we present the technical
detail on how to practically implement this concept of directional
noise. It is important to only note here that this claim \emph{does
not mean} that we can avoid adding noise in any particular direction
altogether. On the contrary, there is still a minimum amount of noise
\emph{required in every direction} for the MVG mechanism to guarantee
differential privacy, but the noise simply can be attributed unevenly
in different directions (see Fig. \ref{fig:dir_noise_samples}, Right, for an example).

\subsection{Utility Gain via Directional Noise } \label{subsec:Utility-Gain-via-dir-noise}

There are multiple ways to exploit the notion of directional noise
to enhance utility of differential privacy. Here, we present two methods
\textendash{} via the domain knowledge and via the SVD/PCA.

\subsubsection{Utilizing Domain Knowledge}

This method is best described by examples. Consider first the personalized
warfarin dosing problem \cite{RefWorks:404}, which can be considered
as the regression problem with the identity query, $f(\mathbf{X})=\mathbf{X}$.
In the i.i.d. noise scheme, every feature used in the warfarin dosing
prediction is equally perturbed. However, domain experts may have
prior knowledge that some features are more critical than the others,
so adding directional noise designed such that the more critical features
are perturbed less can potentially yield better prediction performance.

Let us next consider a slightly more involved matrix-valued query:
the covariance matrix, i.e. $f(\mathbf{X})=\frac{1}{N}\mathbf{X}\mathbf{X}^{T}$,
where $\mathbf{X}\in\mathbb{R}^{M\times N}$ has zero mean and the
columns are the records/samples. Consider now the Netflix prize dataset
\cite{RefWorks:377,RefWorks:252}. A popular method for solving the
Netflix challenge is via low-rank approximation \cite{RefWorks:300},
which often involves the covariance matrix query function \cite{RefWorks:194,RefWorks:408,RefWorks:178}.
Suppose domain experts indicate that some features are more informative
than the others. Since the covariance matrix has the underlying property
that each row and column correspond to a single feature \cite{RefWorks:51},
we can use this domain knowledge with directional noise by adding
less noise to the rows and columns corresponding to the informative
features. 

In both examples, the directions chosen are among the \emph{standard
basis}, e.g. $\mathbf{v}_{1}=[1,0,\ldots,0]^{T},\mathbf{v}_{2}=[0,1,\ldots,0]^{T}$,
which are ones of the simplest forms of directions.

\subsubsection{Using Differentially-Private SVD/PCA}

When domain knowledge is not available, an alternative approach is
to derive the directions via the SVD or PCA. In this context, SVD
and PCA are identical with the main difference being that SVD is compatible
with any matrix-valued query function, while PCA is best suited to
the identity query. Hence, the terms may be used interchangeable in
the subsequent discussion.

As we show in Sec. \ref{subsec:dir_noise_as_noniid}, SVD/PCA can decompose
a matrix into its directions and variances. Hence, we can set aside
a small portion of privacy budget to derive the directions from the
SVD/PCA of the query function. This is illustrated in the following
example. Consider again the warfarin dosing problem \cite{RefWorks:404},
and assume that we do not possess any prior knowledge about the predictive
features. We can learn this information from the data by spending
a small privacy budget on deriving differentially-private principal
components (P.C.). Each P.C. can then serve as a direction and, with
directional noise, we can selectively add less noise in the highly
informative directions as indicated by PCA. 

Clearly, the directions in this example are not necessary among the
standard basis, but can be \emph{any unit vector}. This example illustrates
how directional noise can provide additional utility benefit even
without the domain knowledge. There have been many works on differentially-private
SVD/PCA \cite{RefWorks:405,RefWorks:406,RefWorks:194,RefWorks:249,RefWorks:178,RefWorks:313,hardt2012beating,hardt2013beyond},
so this method is very generally applicable. Again, we reiterate that
the approach similar to the one in the example using SVD applies to
a general matrix-valued query function. Fig. \ref{fig:dir_noise_samples},
Right, illustrates this. In the illustration, the query function has
two dimensions, and we have obtained the utility direction, e.g. from
SVD, as represented by the green line. This can be considered as the
utility subspace we desire to be least perturbed. The many small circles
in the illustration represent how the i.i.d. noise and directional
noise are distributed under the 2D multivariate Gaussian distribution.
Clearly, directional noise can reduce the perturbation experienced
on the utility directions.

In the next section, we discuss how to implement directional noise
with the MVG mechanism in practice and propose two simple algorithms
for two types of directional noise.

\section{Practical Implementation} \label{sec:Practical-Implementation}

The differential privacy condition in Theorem \ref{thm:design_general},
even along with the notion of directional noise in the previous section,
still leads to a large design space for the MVG mechanism. In this
section, we present two simple algorithms to implement the MVG mechanism
with two types of directional noise that can be appropriate for a
wide range of real-world applications. Then, we conclude the section
with a discussion on a sampling algorithm for \emph{$\mathcal{MVG}_{m,n}(\mathbf{0},\boldsymbol{\Sigma},\boldsymbol{\Psi})$}.

As discussed in Sec. \ref{subsec:Dir_noise_via_mvg}, Theorem \ref{thm:design_general}
states that the MVG mechanism satisfies $(\epsilon,\delta)$-differential
privacy as long as the singular values of $\boldsymbol{\Sigma}$ and
$\boldsymbol{\Psi}$ satisfy the sufficient condition. This provides
tremendous flexibility in the choice of the \emph{directions of the
noise}. First, we notice from the sufficient condition in Theorem
\ref{thm:design_general} that the singular values for $\boldsymbol{\Sigma}$
and $\boldsymbol{\Psi}$ are decoupled, i.e. they can be designed
independently so long as, when combined, they satisfy the specified
condition. Hence, the \emph{row-wise noise} and \emph{column-wise
noise} can be considered as the \emph{two modes of noise} in the MVG
mechanism. By this terminology, we discuss two types of directional
noise: the \emph{unimodal} and \emph{equi-modal} directional noise.

\subsection{Unimodal Directional Noise}  \label{subsec:Unimodal-Directional-Noise}

For the unimodal directional noise, we select \emph{one mode} of the
noise to be directional noise, whereas the other mode of the noise
is set to be i.i.d. For this discussion, we assume that the row-wise
noise is directional noise, while the column-wise noise is i.i.d.
However, the opposite case can be readily analyzed with the similar
analysis.

We note that, apart from simplifying the practical implementation
that we will discuss shortly, this type of directional noise can be
appropriate for many applications. For example, for the identity query,
we may not possess any prior knowledge on the quality of each sample,
so the best strategy would be to consider the i.i.d. column-wise noise
(recall that samples are the column vectors).

Formally, the unimodal directional noise sets $\boldsymbol{\Psi}=\mathbf{I}_{n}$,
where $\mathbf{I}_{n}$ is the $n\times n$ identity matrix. This,
consequently, reduces the design space for the MVG mechanism with
directional noise to only that of $\boldsymbol{\Sigma}$. Next, consider
the left side of Eq. \eqref{eq.sufficient_condition}, and we have
\begin{equation}
\left\Vert \boldsymbol{\sigma}(\boldsymbol{\Sigma}^{-1})\right\Vert _{2}=\sqrt{\sum_{i=1}^{m}\frac{1}{\sigma_{i}^{2}(\boldsymbol{\Sigma})}}\ \textrm{, and}\ \left\Vert \boldsymbol{\sigma}(\boldsymbol{\Psi}^{-1})\right\Vert _{2}=\sqrt{n}.\label{eq:singular_u}
\end{equation}
If we square both sides of the sufficient condition and re-arrange
it, we get a form of the condition such that the row-wise noise in
each direction is decoupled: 
\begin{equation}
\sum_{i=1}^{m}\frac{1}{\sigma_{i}^{2}(\boldsymbol{\Sigma})}\leq\frac{1}{n}\frac{(-\beta+\sqrt{\beta^{2}+8\alpha\epsilon})^{4}}{16\alpha^{4}}.\label{eq:decoupled_condition}
\end{equation}

This form gives a very intuitive interpretation of the directional
noise. First, we note that, to have small noise in the $i^{th}$ direction,
$\sigma_{i}(\boldsymbol{\Sigma})$ has to be small (cf. Sec. \ref{subsec:dir_noise_as_noniid}).
However, the sum of $1/\sigma_{i}^{2}(\boldsymbol{\Sigma})$ of the
noise in all directions, which should hence be large, is limited by
the quantity on the right side of Eq. (\ref{eq:decoupled_condition}).
This, in fact, explains why even with directional noise, we still
need to add noise in \emph{every direction} to guarantee differential
privacy. Consider the case when we set the noise in one direction
to be zero, and we have $\underset{\sigma\rightarrow0}{\lim}\frac{1}{\sigma}=\infty$,
which immediately violates the sufficient condition in Eq. (\ref{eq:decoupled_condition}).

From Eq. (\ref{eq:decoupled_condition}), the quantity $1/\sigma_{i}^{2}(\boldsymbol{\Sigma})$
is the inverse of the variance of the noise in the $i^{th}$ direction,
so we may think of it as the \emph{precision} measure of the query
answer in that direction. The intuition is that the higher this value
is, the lower the noise added in that direction, and, hence, the more
precise the query value in that direction is. From this description,
the constraint in Eq. (\ref{eq:decoupled_condition}) can be aptly
named as the \emph{precision budget}, and we immediately have the
following theorem. 
\begin{thm}
For the MVG mechanism with $\boldsymbol{\Psi}=\mathbf{I}_{n}$, the
precision budget is $(-\beta+\sqrt{\beta^{2}+8\alpha\epsilon})^{4}/(16\alpha^{4}n)$. 
\end{thm}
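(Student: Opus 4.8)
The plan is to recognize that this theorem is nothing more than the specialization of the sufficient condition in Theorem~\ref{thm:design_general} to the choice $\boldsymbol{\Psi}=\mathbf{I}_{n}$, followed by an elementary algebraic rearrangement. First I would record the two norm computations that the substitution produces. Setting $\boldsymbol{\Psi}=\mathbf{I}_{n}$ gives $\boldsymbol{\Psi}^{-1}=\mathbf{I}_{n}$, whose $n$ singular values are all equal to $1$, so $\left\Vert\boldsymbol{\sigma}(\boldsymbol{\Psi}^{-1})\right\Vert_{2}=\sqrt{n}$, exactly as stated in Eq.~\eqref{eq:singular_u}. Independently, since $\boldsymbol{\Sigma}$ is positive definite, the singular values of $\boldsymbol{\Sigma}^{-1}$ are the reciprocals $1/\sigma_{i}(\boldsymbol{\Sigma})$, whence $\left\Vert\boldsymbol{\sigma}(\boldsymbol{\Sigma}^{-1})\right\Vert_{2}=\sqrt{\sum_{i=1}^{m}1/\sigma_{i}^{2}(\boldsymbol{\Sigma})}$.

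Next I would substitute both quantities into the left-hand side of the sufficient condition~\eqref{eq.sufficient_condition}, obtaining $\sqrt{\sum_{i=1}^{m}1/\sigma_{i}^{2}(\boldsymbol{\Sigma})}\cdot\sqrt{n}\leq(-\beta+\sqrt{\beta^{2}+8\alpha\epsilon})^{2}/(4\alpha^{2})$. Squaring both sides (permissible since both are nonnegative) and dividing through by $n$ isolates the total precision on the left and yields precisely Eq.~\eqref{eq:decoupled_condition},
\[
\sum_{i=1}^{m}\frac{1}{\sigma_{i}^{2}(\boldsymbol{\Sigma})}\leq\frac{(-\beta+\sqrt{\beta^{2}+8\alpha\epsilon})^{4}}{16\alpha^{4}n}.
\]
Because the left-hand side is by definition the sum of the per-direction precisions $1/\sigma_{i}^{2}(\boldsymbol{\Sigma})$ allocated across the $m$ noise directions, the right-hand side is exactly the quantity that caps this sum, i.e.\ the precision budget, which is the claimed expression.

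There is essentially no hard step here: all of the analytic weight—the Laurent--Massart concentration bound, the trace decomposition of the characteristic equation, and the von~Neumann and Merikoski--Sarria--Tarazaga singular-value inequalities—has already been discharged inside the constants $\alpha$ and $\beta$ in Theorem~\ref{thm:design_general}. The only points requiring genuine (if minor) care are verifying that squaring preserves the inequality, which holds because both sides are manifestly nonnegative, and confirming that $\left\Vert\boldsymbol{\sigma}(\boldsymbol{\Psi}^{-1})\right\Vert_{2}=\sqrt{n}$ for the identity choice so that the factor $n$ lands correctly in the denominator after the squaring step. I expect no obstacle beyond this bookkeeping.
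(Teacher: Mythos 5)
Your proposal is correct and follows exactly the paper's own derivation: the paper likewise substitutes $\boldsymbol{\Psi}=\mathbf{I}_{n}$ into the sufficient condition of Theorem \ref{thm:design_general}, uses $\left\Vert\boldsymbol{\sigma}(\boldsymbol{\Sigma}^{-1})\right\Vert_{2}=\sqrt{\sum_{i=1}^{m}1/\sigma_{i}^{2}(\boldsymbol{\Sigma})}$ and $\left\Vert\boldsymbol{\sigma}(\boldsymbol{\Psi}^{-1})\right\Vert_{2}=\sqrt{n}$ as in Eq.~\eqref{eq:singular_u}, then squares and divides by $n$ to reach Eq.~\eqref{eq:decoupled_condition}, whose right-hand side is declared the precision budget. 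No gaps; your remarks on nonnegativity when squaring and on the placement of the factor $n$ are the only care points, and you handled both.
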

Finally, the remaining task is to determine the directions of the
noise and form $\boldsymbol{\Sigma}$ accordingly. To do so systematically,
we first decompose $\boldsymbol{\Sigma}$ by SVD as, 
\[
\boldsymbol{\Sigma}=\mathbf{W}_{\boldsymbol{\Sigma}}\boldsymbol{\Lambda}_{\boldsymbol{\Sigma}}\mathbf{W}_{\boldsymbol{\Sigma}}^{T}.
\]
This decomposition represents $\boldsymbol{\Sigma}$ by two components
\textendash{} the directions of the row-wise noise indicated by $\mathbf{W}_{\boldsymbol{\Sigma}}$,
and the variance of the noise indicated by $\boldsymbol{\Lambda}_{\boldsymbol{\Sigma}}$.
Since the precision budget only puts constraint upon $\boldsymbol{\Lambda}_{\boldsymbol{\Sigma}}$,
this decomposition allows us to freely chose any unitary matrix for
$\mathbf{W}_{\boldsymbol{\Sigma}}$ such that each column of $\mathbf{W}_{\boldsymbol{\Sigma}}$
indicates each independent direction of the noise.

Therefore, we present the following simple approach to design the
MVG mechanism with the unimodal directional noise: under a given precision
budget, allocate more precision to the directions of more importance.

Alg. \ref{alg:mvg_design_unimodal} formalizes this procedure. It
takes as inputs, among other parameters, the \emph{precision allocation
strategy} $\boldsymbol{\theta}\in(0,1)^{m}$, and the \emph{directions}
$\mathbf{W}_{\boldsymbol{\Sigma}}\in\mathbb{R}^{m\times m}$. The
precision allocation strategy is a vector of size $m$, whose elements,
$\theta_{i}\in(0,1)$, corresponds to the importance of the $i^{th}$
direction indicated by the $i^{th}$ orthonormal column vector of
$\mathbf{W}_{\boldsymbol{\Sigma}}$. The higher the value of $\theta_{i}$,
the more important the $i^{th}$ direction is. Moreover, the algorithm
enforces that $\sum_{i=1}^{m}\theta_{i}\leq1$ to ensure that the
precision budget is not overspent. The algorithm proceeds as follows.
First, compute $\alpha$ and $\beta$ and, then, the precision budget
$P$. Second, assign precision to each direction based on the precision
allocation strategy. Third, derive the variance of the noise in each
direction accordingly. Then, compute $\boldsymbol{\Sigma}$ from the
noise variance and directions, and draw a matrix-valued noise from
$\mathcal{MVG}_{m,n}(\mathbf{0},\boldsymbol{\Sigma},\mathbf{I})$.
Finally, output the query answer with additive matrix noise.

We make a remark here about choosing directions of the noise. As discussed
in Sec. \ref{sec:Directional-Noise}, any orthonormal set of vectors can
be used as the directions. The simplest instance is the
the standard basis vectors, e.g. $\mathbf{e}_{1}=[1,0,0]^{T},\mathbf{e}_{2}=[0,1,0]^{T},\mathbf{e}_{3}=[0,0,1]^{T}$
for $\mathbb{R}^{3}$.

\begin{algorithm}
\begin{flushleft}
\textbf{Input:}{ (a) privacy parameters: $\epsilon,\delta$;
(b) the query function and its sensitivity: $f(\mathbf{X})\in\mathbb{R}^{m\times n},s_{2}(f)$;
(c) the precision allocation strategy $\boldsymbol{\theta}\in(0,1)^{m}:\left|\boldsymbol{\theta}\right|_{1}=1$;
and (d) the $m$ directions of the row-wise noise $\mathbf{W}_{\boldsymbol{\Sigma}}\in\mathbb{R}^{m\times m}$.}

\vspace{0.5em}

\ (1) Compute $\alpha$ and $\beta$ (cf. Theorem \ref{thm:design_general}).

\ (2) Compute the precision budget $P=\frac{(-\beta+\sqrt{\beta^{2}+8\alpha\epsilon})^{4}}{16\alpha^{4}n}$.

\ (3) \textbf{for}{ $i=1,\ldots,m$: }

\vspace{0.5em}

{\quad{}i) Set $p_{i}=\theta_{i}P$.}

{\quad{}ii) Compute the $i^{th}$ direction's variance, $\sigma_{i}(\boldsymbol{\Sigma})=1/\sqrt{p_{i}}$.}

\vspace{0.5em}

\ {(4) Form the diagonal matrix $\boldsymbol{\Lambda}_{\boldsymbol{\Sigma}}=diag([\sigma_{1}(\boldsymbol{\Sigma}),\ldots,\sigma_{m}(\boldsymbol{\Sigma})])$.}

\ {(5) Derive the covariance matrix: $\boldsymbol{\Sigma}=\mathbf{W}_{\boldsymbol{\Sigma}}\boldsymbol{\Lambda}_{\boldsymbol{\Sigma}}\mathbf{W}_{\boldsymbol{\Sigma}}^{T}$.}

\ {(6) Draw a matrix-valued noise $\mathcal{Z}$ from $\mathcal{MVG}_{m,n}(\mathbf{0},\boldsymbol{\Sigma},\mathbf{I})$.}

\vspace{0.5em}
\textbf{Output:}{ $f(\mathbf{X})+\mathcal{Z}$.}

\end{flushleft} 

\caption{MVG mech. w/ unimodal directional noise.} \label{alg:mvg_design_unimodal}
\end{algorithm}

\subsection{Equi-Modal Directional Noise}  \label{subsec:Equi-Modal-Directional-Noise}

Next, we consider the type of directional noise of which the row-wise
noise and column-wise noise are distributed identically, which we
call the equi-modal directional noise. We recommend this type of directional
noise for a \emph{symmetric query function}, i.e. $f(\mathbf{X})=f(\mathbf{X})^{T}\in\mathbb{R}^{m\times m}$.
This covers a wide-range of query functions including the covariance
matrix \cite{RefWorks:249,RefWorks:194,RefWorks:178}, the kernel
matrix \cite{RefWorks:33}, the adjacency matrix of an undirected
graph \cite{RefWorks:329}, and the Laplacian matrix \cite{RefWorks:329}.
The motivation for this recommendation is that, for symmetric query
functions, any prior information about the rows would similarly apply
to the columns, so it is reasonable to use identical row-wise and
column-wise noise.

Formally, this type of directional noise imposes that $\boldsymbol{\Psi}=\boldsymbol{\Sigma}$.
Following a similar derivation to the unimodal type, we have the following
precision budget. 
\begin{thm}
\label{thm:equi_budget} For the MVG mechanism with $\boldsymbol{\Psi}=\boldsymbol{\Sigma}$,
the precision budget is $(-\beta+\sqrt{\beta^{2}+8\alpha\epsilon})^{2}/(4\alpha^{2})$. 
\end{thm}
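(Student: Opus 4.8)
The plan is to specialize the general sufficient condition of Theorem~\ref{thm:design_general} to the constraint $\boldsymbol{\Psi}=\boldsymbol{\Sigma}$, mirroring the derivation carried out for the unimodal case ($\boldsymbol{\Psi}=\mathbf{I}_{n}$) in Sec.~\ref{subsec:Unimodal-Directional-Noise}. First I would note that the identification $\boldsymbol{\Psi}=\boldsymbol{\Sigma}$ is dimensionally meaningful only when $m=n$, which is exactly the regime for which this noise type is recommended, namely a symmetric query function $f(\mathbf{X})=f(\mathbf{X})^{T}\in\mathbb{R}^{m\times m}$. Under this identification the two covariance matrices coincide, hence $\boldsymbol{\Psi}^{-1}=\boldsymbol{\Sigma}^{-1}$ and the two share the same non-increasingly ordered singular values, giving
\[
\left\Vert \boldsymbol{\sigma}(\boldsymbol{\Psi}^{-1})\right\Vert _{2}=\left\Vert \boldsymbol{\sigma}(\boldsymbol{\Sigma}^{-1})\right\Vert _{2}=\sqrt{\sum_{i=1}^{m}\frac{1}{\sigma_{i}^{2}(\boldsymbol{\Sigma})}}.
\]

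Next I would substitute this into the left-hand side of the sufficient condition~\eqref{eq.sufficient_condition}. Since both factors are now equal, their product collapses into a single sum of inverse squared singular values,
\[
\left\Vert \boldsymbol{\sigma}(\boldsymbol{\Sigma}^{-1})\right\Vert _{2}\left\Vert \boldsymbol{\sigma}(\boldsymbol{\Psi}^{-1})\right\Vert _{2}=\sum_{i=1}^{m}\frac{1}{\sigma_{i}^{2}(\boldsymbol{\Sigma})},
\]
so that~\eqref{eq.sufficient_condition} reduces to $\sum_{i=1}^{m}1/\sigma_{i}^{2}(\boldsymbol{\Sigma})\leq(-\beta+\sqrt{\beta^{2}+8\alpha\epsilon})^{2}/(4\alpha^{2})$. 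Recalling from the unimodal discussion that $1/\sigma_{i}^{2}(\boldsymbol{\Sigma})$ is the precision of the query answer in the $i^{th}$ direction, the total precision is bounded by the right-hand side, which is precisely the claimed precision budget.

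The derivation is entirely routine: no inequality beyond Theorem~\ref{thm:design_general} is invoked, and unlike the unimodal case there is no stray $\sqrt{n}$ factor to propagate, because here the column-wise norm is not the constant $\sqrt{n}$ but equals the row-wise norm exactly. I therefore anticipate no genuine obstacle. The one point that merits an explicit sentence is the dimensional-consistency remark: the equi-modal construction is well defined only because the recommended symmetric-query setting forces $m=n$, which is what makes identifying $\boldsymbol{\Psi}$ with $\boldsymbol{\Sigma}$ legitimate. The substantive content of the theorem is thus simply that fixing $\boldsymbol{\Psi}=\boldsymbol{\Sigma}$ converts the mixed product-of-norms budget of Theorem~\ref{thm:design_general} into a clean single-mode precision budget on $\boldsymbol{\Sigma}$ alone.
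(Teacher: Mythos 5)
Your proposal is correct and follows essentially the same route as the paper: the paper's own justification is simply ``following a similar derivation to the unimodal type,'' which is precisely what you carry out \textendash{} setting $\boldsymbol{\Psi}=\boldsymbol{\Sigma}$ collapses the product of norms in Eq.~\eqref{eq.sufficient_condition} into $\sum_{i=1}^{m}1/\sigma_{i}^{2}(\boldsymbol{\Sigma})$, and the right-hand side of that condition is then the precision budget with no squaring step or stray $\sqrt{n}$ factor. Your added remark that the identification requires $m=n$ (guaranteed by the symmetric-query setting) is a useful clarification the paper leaves implicit.
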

Following a similar procedure to the unimodal type, we present Alg.
\ref{alg:mvg_design_equimodal} for the MVG mechanism with the equi-modal
directional noise. The algorithm follows the same steps as Alg. \ref{alg:mvg_design_unimodal},
except it derives the precision budget from Theorem \ref{thm:equi_budget},
and draws the noise from $\mathcal{MVG}_{m,m}(\mathbf{0},\boldsymbol{\Sigma},\boldsymbol{\Sigma})$.

\begin{algorithm}
\begin{flushleft}
\textbf{Input:}{ (a) privacy parameters: $\epsilon,\delta$;
(b) the query function and its sensitivity: $f(\mathbf{X})\in\mathbb{R}^{m\times m},s_{2}(f)$;
(c) the precision allocation strategy $\boldsymbol{\theta}\in(0,1)^{m}:\left|\boldsymbol{\theta}\right|_{1}=1$;
and (d) the $m$ noise directions $\mathbf{W}_{\boldsymbol{\Sigma}}\in\mathbb{R}^{m\times m}$.}

\vspace{0.5em}
\ {(1) Compute $\alpha$ and $\beta$ (cf. Theorem \ref{thm:design_general}).}

\ {(2) Compute the precision budget $P=\frac{(-\beta+\sqrt{\beta^{2}+8\alpha\epsilon})^{2}}{4\alpha^{2}}$.}

\ (3) \textbf{for}{ $i=1,\ldots,m$: }

\vspace{0.5em}
{\quad{}}{i) Set $p_{i}=\theta_{i}P$.}

{\quad{}}{ii) Compute the the $i^{th}$ direction's
variance, $\sigma_{i}(\boldsymbol{\Sigma})=1/\sqrt{p_{i}}$.}
\vspace{0.5em}

\ {(4) Form the diagonal matrix $\boldsymbol{\Lambda}_{\boldsymbol{\Sigma}}=diag([\sigma_{1}(\boldsymbol{\Sigma}),\ldots,\sigma_{m}(\boldsymbol{\Sigma})])$.}

\ {(5) Derive the covariance matrix: $\boldsymbol{\Sigma}=\mathbf{W}_{\boldsymbol{\Sigma}}\boldsymbol{\Lambda}_{\boldsymbol{\Sigma}}\mathbf{W}_{\boldsymbol{\Sigma}}^{T}$.}

\ {(6) Draw a matrix-valued noise $\mathcal{Z}$ from $\mathcal{MVG}_{m,m}(\mathbf{0},\boldsymbol{\Sigma},\boldsymbol{\Sigma})$.}

\vspace{0.5em}
\textbf{Output:}{ $f(\mathbf{X})+\mathcal{Z}$.}

\end{flushleft}

\caption{MVG mech. w/ equi-modal directional noise.} \label{alg:mvg_design_equimodal}
\end{algorithm}

\subsection{Sampling from \texorpdfstring{$\mathcal{MVG}_{m,n}(\mathbf{0},\boldsymbol{\Sigma},\boldsymbol{\Psi})$}{MVG(0,Sigma,Psi)} }
\label{subsec:Sampling-from-mvg}

One remaining question on the practical implementation of the MVG
mechanism is how to efficiently draw the noise from the matrix-variate Gaussian distribution $\mathcal{MVG}_{m,n}(\mathbf{0},\boldsymbol{\Sigma},\boldsymbol{\Psi})$.
One approach to implement a sampler for $\mathcal{MVG}_{m,n}(\mathbf{0},\boldsymbol{\Sigma},\boldsymbol{\Psi})$
is via the affine transformation of samples drawn i.i.d. from the
standard normal distribution, i.e. $\mathcal{N}(0,1)$. The transformation
is described by the following lemma \cite{RefWorks:279}. 
\begin{lem}
\label{lem:affine_tx}Let $\mathcal{N}\in\mathbb{R}^{m\times n}$
be a matrix-valued random variable whose elements are drawn i.i.d.
from the standard normal distribution $\mathcal{N}(0,1)$. Then, the
matrix $\mathcal{Z}=\mathbf{B}_{\boldsymbol{\Sigma}}\mathcal{N}\mathbf{B}_{\boldsymbol{\Psi}}^{T}$
is distributed according to $\mathcal{Z}\sim\mathcal{MVG}_{m,n}(\mathbf{0},\mathbf{B}_{\boldsymbol{\Sigma}}\mathbf{B}_{\boldsymbol{\Sigma}}^{T},\mathbf{B}_{\boldsymbol{\Psi}}\mathbf{B}_{\boldsymbol{\Psi}}^{T})$. 
\end{lem}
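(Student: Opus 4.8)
The plan is to compute the density of $\mathcal{Z}$ directly from that of $\mathcal{N}$ by a change of variables and recognize it as the $\mathcal{MVG}$ density of Definition \ref{def:tmvg_dist}. First I would observe that since the entries of $\mathcal{N}$ are i.i.d. $\mathcal{N}(0,1)$, its joint density is $p_{\mathcal{N}}(\mathbf{N})=(2\pi)^{-mn/2}\exp\{-\tfrac{1}{2}\sum_{i,j}N_{ij}^{2}\}=(2\pi)^{-mn/2}\exp\{-\tfrac{1}{2}\mathrm{tr}(\mathbf{N}^{T}\mathbf{N})\}$, which is exactly the $\mathcal{MVG}_{m,n}(\mathbf{0},\mathbf{I}_{m},\mathbf{I}_{n})$ density, so $\mathcal{N}$ serves as the whitened reference variable. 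I would assume $\mathbf{B}_{\boldsymbol{\Sigma}}\in\mathbb{R}^{m\times m}$ and $\mathbf{B}_{\boldsymbol{\Psi}}\in\mathbb{R}^{n\times n}$ are invertible (which is precisely the condition making $\boldsymbol{\Sigma}=\mathbf{B}_{\boldsymbol{\Sigma}}\mathbf{B}_{\boldsymbol{\Sigma}}^{T}$ and $\boldsymbol{\Psi}=\mathbf{B}_{\boldsymbol{\Psi}}\mathbf{B}_{\boldsymbol{\Psi}}^{T}$ positive definite so the target density exists), whence the map $\mathbf{N}\mapsto\mathbf{Z}=\mathbf{B}_{\boldsymbol{\Sigma}}\mathbf{N}\mathbf{B}_{\boldsymbol{\Psi}}^{T}$ is a linear bijection with inverse $\mathbf{N}=\mathbf{B}_{\boldsymbol{\Sigma}}^{-1}\mathbf{Z}(\mathbf{B}_{\boldsymbol{\Psi}}^{-1})^{T}$.

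Next I would apply the change-of-variables formula, whose two ingredients are the Jacobian determinant and the transformed exponent. For the Jacobian I would vectorize the map: using $\mathrm{vec}(\mathbf{A}\mathbf{X}\mathbf{C})=(\mathbf{C}^{T}\otimes\mathbf{A})\mathrm{vec}(\mathbf{X})$ one gets $\mathrm{vec}(\mathbf{Z})=(\mathbf{B}_{\boldsymbol{\Psi}}\otimes\mathbf{B}_{\boldsymbol{\Sigma}})\mathrm{vec}(\mathbf{N})$, whose determinant is $|\mathbf{B}_{\boldsymbol{\Psi}}\otimes\mathbf{B}_{\boldsymbol{\Sigma}}|=|\mathbf{B}_{\boldsymbol{\Psi}}|^{m}|\mathbf{B}_{\boldsymbol{\Sigma}}|^{n}$ by the Kronecker determinant identity. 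For the exponent I would substitute the inverse and use $\boldsymbol{\Sigma}^{-1}=(\mathbf{B}_{\boldsymbol{\Sigma}}^{-1})^{T}\mathbf{B}_{\boldsymbol{\Sigma}}^{-1}$ and $\boldsymbol{\Psi}^{-1}=(\mathbf{B}_{\boldsymbol{\Psi}}^{-1})^{T}\mathbf{B}_{\boldsymbol{\Psi}}^{-1}$ together with the cyclic invariance of the trace to obtain $\mathrm{tr}(\mathbf{N}^{T}\mathbf{N})=\mathrm{tr}[\mathbf{B}_{\boldsymbol{\Psi}}^{-1}\mathbf{Z}^{T}(\mathbf{B}_{\boldsymbol{\Sigma}}^{-1})^{T}\mathbf{B}_{\boldsymbol{\Sigma}}^{-1}\mathbf{Z}(\mathbf{B}_{\boldsymbol{\Psi}}^{-1})^{T}]=\mathrm{tr}[\boldsymbol{\Psi}^{-1}\mathbf{Z}^{T}\boldsymbol{\Sigma}^{-1}\mathbf{Z}]$, which is exactly the canonical exponent of Definition \ref{def:tmvg_dist} with $\mathbf{M}=\mathbf{0}$.

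Finally I would reconcile the normalizing constants. Since $|\boldsymbol{\Sigma}|=|\mathbf{B}_{\boldsymbol{\Sigma}}|^{2}$ and $|\boldsymbol{\Psi}|=|\mathbf{B}_{\boldsymbol{\Psi}}|^{2}$, the Jacobian factor $|\mathbf{B}_{\boldsymbol{\Psi}}|^{-m}|\mathbf{B}_{\boldsymbol{\Sigma}}|^{-n}$ equals $|\boldsymbol{\Psi}|^{-m/2}|\boldsymbol{\Sigma}|^{-n/2}$, so $p_{\mathcal{Z}}(\mathbf{Z})=p_{\mathcal{N}}(\mathbf{B}_{\boldsymbol{\Sigma}}^{-1}\mathbf{Z}(\mathbf{B}_{\boldsymbol{\Psi}}^{-1})^{T})\cdot|\mathbf{B}_{\boldsymbol{\Psi}}|^{-m}|\mathbf{B}_{\boldsymbol{\Sigma}}|^{-n}$ collapses to precisely the density of $\mathcal{MVG}_{m,n}(\mathbf{0},\boldsymbol{\Sigma},\boldsymbol{\Psi})$ with $\boldsymbol{\Sigma}=\mathbf{B}_{\boldsymbol{\Sigma}}\mathbf{B}_{\boldsymbol{\Sigma}}^{T}$ and $\boldsymbol{\Psi}=\mathbf{B}_{\boldsymbol{\Psi}}\mathbf{B}_{\boldsymbol{\Psi}}^{T}$, which is the claim.

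I expect the main obstacle to be the bookkeeping in the Jacobian step: placing the exponents $m$ and $n$ on $|\mathbf{B}_{\boldsymbol{\Psi}}|$ and $|\mathbf{B}_{\boldsymbol{\Sigma}}|$ correctly via the Kronecker identity, and then matching them against the half-powers $m/2,n/2$ in the $\mathcal{MVG}$ normalizer. A secondary subtlety is the invertibility assumption: if $\mathbf{B}_{\boldsymbol{\Sigma}}$ or $\mathbf{B}_{\boldsymbol{\Psi}}$ is rank-deficient the density does not exist, and I would instead match characteristic functions, using that a linear image of a jointly Gaussian vector is Gaussian with mean $\mathbf{0}$ and the stated Kronecker-structured covariance $\boldsymbol{\Psi}\otimes\boldsymbol{\Sigma}$, which also covers the degenerate case.
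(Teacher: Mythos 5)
Your proof is correct, but there is nothing in the paper to compare it against: the paper never proves Lemma \ref{lem:affine_tx}, it simply imports it from the matrix-variate statistics literature (the citation attached to the lemma statement) and uses it as a black box in the sampler of Sec.~\ref{subsec:Sampling-from-mvg} and in the privacy proof of Appendix~\ref{sec:Full-Proof}. Your change-of-variables argument is the standard derivation and every step checks out: the identification of $p_{\mathcal{N}}(\mathbf{N})=(2\pi)^{-mn/2}\exp\{-\tfrac{1}{2}\mathrm{tr}(\mathbf{N}^{T}\mathbf{N})\}$ with the $\mathcal{MVG}_{m,n}(\mathbf{0},\mathbf{I}_{m},\mathbf{I}_{n})$ density of Definition~\ref{def:tmvg_dist}; the vectorized Jacobian $\mathrm{vec}(\mathbf{Z})=(\mathbf{B}_{\boldsymbol{\Psi}}\otimes\mathbf{B}_{\boldsymbol{\Sigma}})\mathrm{vec}(\mathbf{N})$ with determinant $|\mathbf{B}_{\boldsymbol{\Psi}}|^{m}|\mathbf{B}_{\boldsymbol{\Sigma}}|^{n}$; the trace manipulation via $\boldsymbol{\Sigma}^{-1}=(\mathbf{B}_{\boldsymbol{\Sigma}}^{-1})^{T}\mathbf{B}_{\boldsymbol{\Sigma}}^{-1}$, $\boldsymbol{\Psi}^{-1}=(\mathbf{B}_{\boldsymbol{\Psi}}^{-1})^{T}\mathbf{B}_{\boldsymbol{\Psi}}^{-1}$ and cyclicity; and the reconciliation of $|\mathbf{B}_{\boldsymbol{\Psi}}|^{-m}|\mathbf{B}_{\boldsymbol{\Sigma}}|^{-n}$ with $|\boldsymbol{\Psi}|^{-m/2}|\boldsymbol{\Sigma}|^{-n/2}$. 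Two minor remarks. First, the Jacobian factor should strictly carry absolute values, $|\det(\mathbf{B}_{\boldsymbol{\Psi}})|^{-m}|\det(\mathbf{B}_{\boldsymbol{\Sigma}})|^{-n}$, which is harmless here since the final matching only uses the squared determinants. Second, your invertibility assumption is exactly the regime in which the paper applies the lemma: there $\mathbf{B}_{\boldsymbol{\Sigma}}=\mathbf{W}_{\boldsymbol{\Sigma}}\boldsymbol{\Lambda}_{\boldsymbol{\Sigma}}^{1/2}$ and $\mathbf{B}_{\boldsymbol{\Psi}}=\mathbf{W}_{\boldsymbol{\Psi}}\boldsymbol{\Lambda}_{\boldsymbol{\Psi}}^{1/2}$ arise from SVDs of positive-definite covariances, and Definition~\ref{def:tmvg_dist} itself presupposes $\boldsymbol{\Sigma}^{-1},\boldsymbol{\Psi}^{-1}$ exist, so the degenerate case falls outside the paper's notion of $\mathcal{MVG}$; your characteristic-function fallback with covariance $\boldsymbol{\Psi}\otimes\boldsymbol{\Sigma}$ is a correct way to handle it but is not needed for the lemma as used.
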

This transformation consequently allows the conversion between $mn$
samples drawn i.i.d. from $\mathcal{N}(0,1)$ and a sample drawn from
$\mathcal{MVG}_{m,n}(\mathbf{0},\mathbf{B}_{\boldsymbol{\Sigma}}\mathbf{B}_{\boldsymbol{\Sigma}}^{T},\mathbf{B}_{\boldsymbol{\Psi}}\mathbf{B}_{\boldsymbol{\Psi}}^{T})$.
To derive $\mathbf{B}_{\boldsymbol{\Sigma}}$ and $\mathbf{B}_{\boldsymbol{\Psi}}$
from given $\boldsymbol{\Sigma}$ and $\boldsymbol{\Psi}$ for $\mathcal{MVG}_{m,n}(\mathbf{0},\boldsymbol{\Sigma},\boldsymbol{\Psi})$,
we solve the two linear equations: $\mathbf{B}_{\boldsymbol{\Sigma}}\mathbf{B}_{\boldsymbol{\Sigma}}^{T}=\boldsymbol{\Sigma}$,
and $\mathbf{B}_{\boldsymbol{\Psi}}\mathbf{B}_{\boldsymbol{\Psi}}^{T}=\boldsymbol{\Psi}$,
and the solutions of these two equations can be acquired readily via
the Cholesky decomposition or SVD (cf. \cite{RefWorks:208}). We summarize
the steps for this implementation here using SVD: 
\begin{enumerate}
\item Draw $mn$ i.i.d. samples from $\mathcal{N}(0,1)$, and form a matrix
$\mathcal{N}$. 
\item Let $\mathbf{B}_{\boldsymbol{\Sigma}}=\mathbf{W}_{\boldsymbol{\Sigma}}\boldsymbol{\Lambda}_{\boldsymbol{\Sigma}}^{1/2}$
and $\mathbf{B}_{\boldsymbol{\Psi}}=\mathbf{W}_{\boldsymbol{\Psi}}\boldsymbol{\Lambda}_{\boldsymbol{\Psi}}^{1/2}$,
where $\mathbf{W}_{\boldsymbol{\Sigma}},\boldsymbol{\Lambda}_{\boldsymbol{\Sigma}}$
and $\mathbf{W}_{\boldsymbol{\Psi}},\boldsymbol{\Lambda}_{\boldsymbol{\Psi}}$
are derived from SVD of $\boldsymbol{\Sigma}$ and $\boldsymbol{\Psi}$,
respectively. 
\item Compute the sample $\mathcal{Z}=\mathbf{B}_{\boldsymbol{\Sigma}}\mathcal{N}\mathbf{B}_{\boldsymbol{\Psi}}^{T}$. 
\end{enumerate}
The complexity of this method depends on that of the $\mathcal{N}(0,1)$
sampler used. Plus, there is an additional $\mathcal{O}(\max\{m^{3},n^{3}\})$
complexity from SVD \cite{RefWorks:451}\footnote{Note that $n$ here is \emph{not} the number of samples or records but is the dimension of the matrix-valued query output, i.e. $f(\mathbf{X})\in \mathbb{R}^{m\times n}$.}. The memory needed is in the order of $m^2+n^2+mn$ from the three matrices required in step (3).

\section{Experimental Setups}

\label{sec:Experiments}

We evaluate the proposed MVG mechanism on three experimental setups
and datasets. Table \ref{tab:exp_setups} summarizes our setups. In
all experiments, 100 trials are carried out and the average and 95\%
confidence interval are reported.

\subsection{Experiment I: Regression}

\subsubsection{Task and Dataset.}

The first experiment considers the regression application on the Liver
Disorders dataset \cite{RefWorks:322,RefWorks:410}, which contains
5 features from the blood sample of 345 patients. We leave out the
samples from 97 patients for testing, so the private dataset contains
248 patients. Following suggestions by Forsyth and Rada \cite{RefWorks:413},
we use these features to predict the average daily alcohol consumption.
All features and teacher values are $\in[0,1]$.

\subsubsection{Query Function and Evaluation Metric.}

We perform regression in a differentially-private manner via the identity
query, i.e. $f(\mathbf{X})=\mathbf{X}$. Since regression involves
the teacher values, we treat them as a feature, so the query size
becomes $6\times248$. We use the kernel ridge regression (KRR) \cite{RefWorks:33,RefWorks:231}
as the regressor, and the root-mean-square error (RMSE) \cite{RefWorks:51,RefWorks:33}
as the evaluation metric.

\subsubsection{MVG Mechanism Design.}

As discussed in Sec. \ref{subsec:Unimodal-Directional-Noise}, Alg. \ref{alg:mvg_design_unimodal}
is appropriate for the identity query, so we employ it for this experiment.
The $L_{2}$-sensitivity of this query is $\sqrt{6}$ (cf. Appendix
\ref{sec:L2-Sensitivities}). To identify the informative directions
to allocate the precision budget, we implement both methods discussed
in Sec. \ref{subsec:Utility-Gain-via-dir-noise} as follows.

(a) For the method using
domain knowledge (denoted \emph{MVG-1}), we refer to Alatalo et al.
\cite{RefWorks:395}, which indicates that alanine aminotransferase
(ALT) is the most indicative feature for predicting the alcohol consumption
behavior. Additionally, from our prior experience working with regression
problems, we anticipate that the teacher value (Y) is another important
feature to allocate more precision budget to. With this setup, we
use the standard basis vectors as the directions (cf. Sec. \ref{subsec:Utility-Gain-via-dir-noise}),
and employ the following \emph{binary precision allocation strategy}. 
\begin{itemize}
\item Allocate $\tau$\% of the precision budget to the two
important features (ALT and Y) by equal amount. 
\item Allocate the rest of the precision budget equally to the rest of the
features. 
\end{itemize}
We vary $\tau\in\{55,65,\ldots,95\}$ and report the best results.\footnote{In the real-world deployment, this parameter selection process should
also be made private \cite{RefWorks:417}.} 

(b) For the method using differentially-private SVD/PCA (denoted \emph{MVG-2}), given the total budget of $\{\epsilon,\delta\}$ reported in Sec. \ref{sec:experimental_results}, we spend
 $0.2\epsilon$ and $0.2\delta$  on the derivation of
the two most informative directions via the differentially-private
PCA algorithm in \cite{RefWorks:194}. We specify the first two principal components
as the indicative features for a fair comparison with the method using
domain knowledge. The remaining $0.8\epsilon$ and $0.8\delta$ are then used for Alg. \ref{alg:mvg_design_unimodal}. Again, for a fair comparison with the method using domain knowledge, we use the same binary precision
allocation strategy in Alg. \ref{alg:mvg_design_unimodal} for \emph{MVG-2}.

\subsection{Experiment II: \texorpdfstring{1$^{st}$}{1st} Principal Component}

\begin{table}
\begin{centering}
\begin{tabular}{|>{\centering}p{1.6cm}|>{\centering}p{1.76cm}|>{\centering}p{1.8cm}|>{\centering}p{1.76cm}|}
\hline 
 & {Exp. I } & {Exp. II } & {Exp. III}\tabularnewline
\hline 
\hline 
{Task } & {Regression } & {$1^{st}$ P.C. } & {Covariance estimation}\tabularnewline
\hline 
{Dataset } & {Liver \cite{RefWorks:322,RefWorks:413} } & {Movement \cite{RefWorks:396} } & {CTG \cite{RefWorks:322,RefWorks:412}}\tabularnewline
\hline 
{\# samples $N$} & {248} & {2,021} & {2,126}\tabularnewline
\hline 
{\# features $M$} & {6} & {4} & {21}\tabularnewline
\hline 
{Query $f(\mathbf{X})$ } & {$\mathbf{X}$ } & {$\mathbf{X}\mathbf{X}^{T}/N$ } & {$\mathbf{X}$}\tabularnewline
\hline 
{Query size } & {$6\times248$ } & {$4\times4$ } & {$21\times2126$}\tabularnewline
\hline 
{Eval. metric } & {RMSE } & {$\Delta\rho$ (Eq. (\ref{eq:delta_rho})) } & {RSS (Eq. (\ref{eq:rss}))}\tabularnewline
\hline 
{MVG Alg. } & {1 } & {2 } & {1}\tabularnewline
\hline 
{Source of directions } & {Domain knowledge \cite{RefWorks:395} /PCA \cite{RefWorks:194} } & {Data collection setup \cite{RefWorks:396}} & {Domain knowledge \cite{RefWorks:416}}\tabularnewline
\hline 
\end{tabular}
\par\end{centering}
\caption{The three experimental setups.} \label{tab:exp_setups}
 
\end{table}

\subsubsection{Task and Dataset.}

The second experiment considers the problem of determining the first
principal component ($1^{st}$ P.C.) from the principal component
analysis (PCA). This is one of the most popular problems in machine
learning and differential privacy. We only consider the first principal
component for two reasons. First, many prior works in differentially-private
PCA algorithm consider this problem or the similar problem of deriving
a few major P.C. (cf. \cite{RefWorks:178,RefWorks:313,RefWorks:249}),
so this allows us to compare our approach to the state-of-the-art
approaches of a well-studied problem. Second, in practice, this method
for deriving the $1^{st}$ P.C. may be used iteratively to derive
the rest of the principal components (cf. \cite{RefWorks:414}).

We use the Movement Prediction via RSS (Movement) dataset \cite{RefWorks:396},
which consists of the radio signal strength measurement from 4 sensor
anchors (ANC\{0-3\}) \textendash{} corresponding to the 4 features
\textendash{} from 2,021 movement samples. The feature data all have
the range of $[-1,1]$.

\subsubsection{Query Function and Evaluation Metric.}

We consider the covariance matrix query, i.e. $f(\mathbf{X})=\frac{1}{N}\mathbf{X}\mathbf{X}^{T}$,
and use SVD to derive the $1^{st}$ P.C. from it. Hence, the query
size is $4\times4$. We adopt the quality metric commonly used for
P.C. \cite{RefWorks:33} and also used by Dwork et al. \cite{RefWorks:249},
i.e. the \emph{captured variance} $\rho$. For a given P.C. $\mathbf{v}$,
the capture variance by $\mathbf{v}$ on the covariance matrix $\bar{\mathbf{S}}$
is defined as $\rho=\mathbf{v}^{T}\bar{\mathbf{S}}\mathbf{v}$. To
be consistent with other experiments, we report the absolute error
in $\rho$ as deviated from the maximum $\rho$. It is well-established
that the maximum $\rho$ is equal to the largest eigenvalue of $\bar{\mathbf{S}}$
(cf. \cite[Theorem 4.2.2]{RefWorks:208}, \cite{RefWorks:415}). Hence,
the metric can be written as, 
\begin{equation}
\Delta\rho(\mathbf{v})=\lambda_{1}-\rho(\mathbf{v}),\label{eq:delta_rho}
\end{equation}
where $\lambda_{1}$ is the largest eigenvalue of $\bar{\mathbf{S}}$.
For the ideal, non-private case, the error would clearly be zero.

\subsubsection{MVG Mechanism Design.}

As discussed in Sec. \ref{subsec:Equi-Modal-Directional-Noise}, Alg.
\ref{alg:mvg_design_equimodal} is appropriate for the covariance
query, so we employ it for this experiment. The $L_{2}$-sensitivity
of this query is $8/2021$ (cf. Appendix \ref{sec:L2-Sensitivities}).
To identify the informative directions to allocate the precision budget,
we inspect the data collection setup described in \cite{RefWorks:396},
and use two of the four anchors as the more informative anchors due
to their proximity to the movement path (ANC0 and ANC3). Hence, we
use the standard basis vectors as the directions (cf. Sec. \ref{sec:Practical-Implementation})
and allocate more precision budget to these two features using the
same strategy as in Exp. I.

\subsection{Experiment III: Covariance Estimation}

\subsubsection{Task and Dataset.}

The third experiment considers the similar problem to Exp. II but
with a different flavor. In this experiment, we consider the task
of estimating the covariance matrix \emph{from the perturbed database}.
This differs from Exp. II in three ways. First, for covariance estimation,
we are interested in every P.C. Second, as mentioned in Exp. II, many
previous works do not consider every P.C., so the previous works for
comparison are different. Third, to give a different taste of our
approach, we consider the method of input perturbation for estimating
the covariance, i.e. query the noisy database and use it to compute
the covariance. We use the Cardiotocography (CTG) dataset \cite{RefWorks:322,RefWorks:412},
which consists of 21 features in the range of $[0,1]$ from 2,126
fetal samples.

\subsubsection{Query Function and Evaluation Metric.}

We consider the method via input perturbation, so we use the identity
query, i.e. $f(\mathbf{X})=\mathbf{X}$. The query size is $21\times2126$.
We adopt the captured variance as the quality metric similar to Exp.
II, but since we are interested in every P.C., we consider the \emph{residual
sum of square (RSS)} \cite{RefWorks:350} of every P.C. This is similar
to the total residual variance used by Dwork et al. (\cite[p. 5]{RefWorks:249}).
Formally, given the perturbed database $\tilde{\mathbf{X}}$, the
covariance estimate is $\tilde{\mathbf{S}}=\frac{1}{N}\tilde{\mathbf{X}}\tilde{\mathbf{X}}^{T}$.
Let $\{\tilde{\mathbf{v}}_{i}\}$ be the set of P.C.'s derived from
$\tilde{\mathbf{S}}$, and the RSS is, 
\begin{equation}
RSS(\tilde{\mathbf{S}})=\sum_{i}(\lambda_{i}-\rho(\tilde{\mathbf{v}}_{i}))^{2},\label{eq:rss}
\end{equation}
where $\lambda_{i}$ is the $i^{th}$ eigenvalue of $\bar{\mathbf{S}}$
(cf. Exp. II), and $\rho(\tilde{\mathbf{v}}_{i})$ is the captured
variance of the $i^{th}$ P.C. derived from $\tilde{\mathbf{S}}$.
Clearly, in the non-private case, $RSS(\bar{\mathbf{S}})=0$.

\subsubsection{MVG Mechanism Design.}

Since we consider the identity query, we employ Alg. \ref{alg:mvg_design_unimodal}
for this experiment. The $L_{2}$-sensitivity is $\sqrt{21}$ (cf.
Appendix \ref{sec:L2-Sensitivities}). To identify the informative
directions to allocate the precision budget to, we refer to the domain
knowledge from Costa Santos et al. \cite{RefWorks:416}, which identifies
three features to be most informative, viz. fetal heart rate (FHR),
\%time with abnormal short term variability (ASV), and \%time with abnormal long term variability (ALV). Hence, we use the standard basis vectors as the directions and
allocate more precision budget to these three features using the same
strategy as in Exp. I.

\subsection{Comparison to Previous Works}

Since our approach falls into the category of basic mechanism, we
compare our work to the four prior state-of-the-art basic mechanisms discussed
in Sec. \ref{subsec:Basic-Mechanisms}, namely, the Laplace mechanism,
the Gaussian mechanism, the Exponential mechanism, and the JL transform
method.

For Exp. I and III, since we consider the identity query, the four
previous works for comparison are by Dwork et al. \cite{RefWorks:195},
Dwork et al. \cite{RefWorks:186}, Blum et al. \cite{RefWorks:174},
and Upadhyay \cite{RefWorks:399}, for the four basic mechanisms, respectively.

For Exp. II, we consider the $1^{st}$ P.C. As this problem has been
well-investigated, we compare our approach to the state-of-the-art
algorithms specially designed for this problem. These four algorithms using
the four prior basic mechanisms are, respectively: Dwork et al. \cite{RefWorks:195},
Dwork et al. \cite{RefWorks:249}, Chaudhuri et al. \cite{RefWorks:178},
and Blocki et al. \cite{RefWorks:313}. We note that these four algorithms chosen for comparison are designed and optimized specifically  for the particular application, so they utilize the positive-semidefinite (PSD) nature of the matrix query. On the other hand, the MVG mechanism used here is generally applicable for matrix queries even beyond the particular application, and makes no assumptions about the PSD structure of the matrix query. In other words, we intentionally give a favorable edge to the compared methods to show that, despite the handicap, the MVG mechanism can still perform comparably well.

For all previous works, we use the parameter values as suggested by
the authors of the method, and vary the free variable before reporting
the best performance.

Finally, we recognize that some of these prior works have a different
privacy guarantee from ours, namely, $\epsilon$-differential privacy.
Nevertheless, we present these prior works for comprehensive coverage
of prior basic mechanisms, and we will keep this difference in mind
when discussing the results.

\section{Experimental Results}
\label{sec:experimental_results}

Table \ref{tab:Ex1_results}, Table \ref{tab:Exp2_results}, and Table \ref{tab:Exp3_results} report the experimental results for Experiment I, II, and III, respectively. The performance shown is an average over 100 trials, along with the 95\% confidence interval.

\begin{table}
\begin{centering}
\begin{tabular}{|c|c|c|c|}
\hline 
{Method } & {$\epsilon$ } & {$\delta$ } & {RMSE ($\times10^{-2}$)}\tabularnewline
\hline 
\hline 
{Non-private } & {- } & {- } & {1.226}\tabularnewline
\hline 
{Random guess } & {- } & {- } & {$\sim3.989$}\tabularnewline
\hline 
{MVG-1 (Alg. \ref{alg:mvg_design_unimodal} + knowledge in \cite{RefWorks:395})} & {1.} & {$1/N$ } & {$1.624\pm0.026$}\tabularnewline
\hline 
{MVG-2 (Alg. \ref{alg:mvg_design_unimodal} + DP-PCA \cite{RefWorks:194})} & {1.} & {$1/N$ } & {$1.643\pm0.023$}\tabularnewline
\hline 
{Gaussian (Dwork et al. \cite{RefWorks:186}) } & {1.} & {$1/N$ } & {$1.913\pm0.069$}\tabularnewline
\hline 
{JL transform (Upadhyay \cite{RefWorks:399}) } & {1.} & {$1/N$ } & {$1.682\pm0.015$}\tabularnewline
\hline 
{Laplace (Dwork et al. \cite{RefWorks:195}) } & {1.} & {0 } & {$2.482\pm0.189$}\tabularnewline
\hline 
{Exponential (Blum et al. \cite{RefWorks:174}) } & {1.} & {0 } & {$2.202\pm0.721$}\tabularnewline
\hline 
\end{tabular}
\par\end{centering}
\caption{Results from Exp. I: regression. MVG-1 derives noise directions from
domain knowledge, while MVG-2 derives them from differentially-private
PCA in \cite{RefWorks:194}. } \label{tab:Ex1_results}
 
\end{table}

\subsection{Experiment I: Regression}

Table \ref{tab:Ex1_results} reports the results for Exp. I. Here
are the key observations. 
\begin{itemize}
\item Compared to the non-private baseline, the best MVG mechanism
(MVG-1) yields similar performance (difference of .004 in RMSE). 
\item Compared to other $(\epsilon,\delta)$-basic mechanisms, i.e. the Gaussian mechanism
and the JL transform, the best MVG mechanism (MVG-1) has better utility (by .003
and .0006 in RMSE, respectively) with the same privacy guarantee. 
\item Compared to other $\epsilon$-basic mechanisms, i.e. the Laplace and Exponential
mechanisms, the best MVG mechanism (MVG-1) provides \emph{significantly} better
utility (\url{~}150\%) with the slightly weaker $(\epsilon,1/N)$-differential
privacy guarantee. 
\item Even when some privacy budget is spent on deriving the direction via PCA\cite{RefWorks:194}
(MVG-2), the MVG mechanism still yields the best performance among
all other non-MVG methods.
\end{itemize}
Overall, the results from regression show the promise of the MVG mechanism.
Our approach can outperform all other $(\epsilon,\delta)$-basic mechanisms.
Although it provides a weaker privacy guarantee than other $\epsilon$-basic
mechanisms, it can provide considerably more utility (\url{~}150\%).
As advocated by Duchi et al. \cite{RefWorks:419} and Fienberg et
al. \cite{RefWorks:418}, this trade-off could be attractive in some
settings, e.g. critical medical or emergency situations.

\subsection{Experiment II: \texorpdfstring{1$^{st}$}{1st} Principal Component}

\begin{table}
\begin{centering}
\begin{tabular}{|c|c|c|c|}
\hline 
{Method } & {$\epsilon$ } & {$\delta$ } & {Err. $\Delta\rho$ ($\cdot10^{-1})$}\tabularnewline
\hline 
\hline 
{Non-private } & {- } & {- } & {$0$}\tabularnewline
\hline 
{Random guess } & {- } & {- } & {$\sim3.671$}\tabularnewline
\hline 
{MVG (Alg. \ref{alg:mvg_design_equimodal}) } & {1.} & {$1/N$ } & {$2.384\pm0.215$}\tabularnewline
\hline 
{Gaussian (Dwork et al. \cite{RefWorks:249}) } & {1.} & {$1/N$ } & {$2.485\pm0.207$}\tabularnewline
\hline 
{JL transform (Blocki et al. \cite{RefWorks:313}) } & {1.} & {$1/N$ } & {$2.394\pm0.216$}\tabularnewline
\hline 
{Laplace (Dwork et al. \cite{RefWorks:195}) } & {1.} & {0 } & {$2.634\pm0.174$}\tabularnewline
\hline 
{Exponential (Chaudhuri et al. \cite{RefWorks:178}) } & {1.} & {0 } & {$2.455\pm0.209$}\tabularnewline
\hline 
\end{tabular}
\par\end{centering}
\caption{Results from Exp. II: first principal component.} \label{tab:Exp2_results}
\end{table}

Table \ref{tab:Exp2_results} reports the results for Exp. II. Here
are the key observations. 
\begin{itemize}
\item Compared to the non-private baseline, the MVG mechanism
has reasonably small error $\Delta\rho$ of $0.2387$.
\item Compared to other $(\epsilon,\delta)$-basic mechanisms, i.e. the Gaussian mechanism
and the JL transform, the MVG mechanism provides better utility with the
same privacy guarantee (.01 and .0001 smaller error $\Delta\rho$,
respectively). 
\item Compared to other $\epsilon$-basic mechanisms, i.e. the Laplace and Exponential mechanisms,
the MVG mechanism yields higher utility with slightly weaker $(\epsilon,1/N)$-differential
privacy guarantee (.03 and .01 smaller error $\Delta\rho$, respectively). 
\end{itemize}
Overall, the MVG mechanism provides the best utility. Though we admit
that, with a weaker privacy guarantee, it does not provide significant
utility increase over the Exponential mechanism by Chaudhuri et al.
\cite{RefWorks:178}. Nevertheless, this method \cite{RefWorks:178} utilizes the positive-semidefinite (PSD) characteristic of the matrix query and is known to be among the best algorithms for this specific task. On the other hand, the MVG mechanism used in the experiment is more general. Furthermore, we show in the full version of this work that, when utilizing the PSD characteristic of the query function, the MVG mechanism can significantly outperform all three methods being compared here \cite{chanyaswad2018differential}. Again, in some applications, this trade-off of weaker privacy for better utility might be desirable \cite{RefWorks:418,RefWorks:419}, and
the MVG mechanism clearly provides the best trade-off.

\subsection{Experiment III: Covariance Estimation}

\begin{table}
\begin{centering}
\begin{tabular}{|c|c|c|c|}
\hline 
{Method } & {$\epsilon$ } & {$\delta$ } & {RSS ($\times10^{-2})$}\tabularnewline
\hline 
\hline 
{Non-private } & {- } & {- } & {$0$}\tabularnewline
\hline 
{Random guess } & {- } & {- } & {$\sim12.393$}\tabularnewline
\hline 
{MVG (Alg. \ref{alg:mvg_design_unimodal}) } & {1.} & {$1/N$ } & {$6.657\pm0.193$}\tabularnewline
\hline 
{Gaussian (Dwork et al. \cite{RefWorks:186}) } & {1.} & {$1/N$ } & {$7.029\pm0.216$}\tabularnewline
\hline 
{JL transform (Upadhyay \cite{RefWorks:399}) } & {1.} & {$1/N$ } & {$6.718\pm0.229$}\tabularnewline
\hline 
{Laplace (Dwork et al. \cite{RefWorks:195}) } & {1.} & {0 } & {$7.109\pm0.211$}\tabularnewline
\hline 
{Exponential (Blum et al. \cite{RefWorks:174}) } & {1.} & {0 } & {$7.223\pm0.211$}\tabularnewline
\hline 
\end{tabular}
\par\end{centering}
\caption{Results from Exp. III: covariance estimation.} \label{tab:Exp3_results}
\end{table}

\begin{figure*}
\begin{centering}
\includegraphics[scale=0.51]{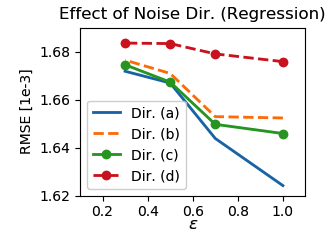}\includegraphics[scale=0.7]{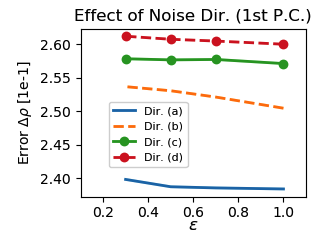}\includegraphics[scale=0.51]{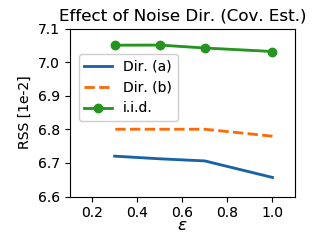}
\par\end{centering}
\caption{Effect of noise directions on the utility (all with $\delta=1/N$).
(Left) Exp. I: regression on the Liver dataset. The four directions shown put more precision
budget on the following features: (a) \{ALT, Y\}, (b) \{ALT\}, (c)
\{Y\}, (d) \{ALT, AST, Y\}. (Middle) Exp. II: $1^{st}$ P.C. on the Movement dataset. The four
directions emphasize the following features: (a) \{ANC0, ANC3\}, (b)
\{ANC0\}, (c) \{ANC3\}, (d) \{ANC1, ANC2\}. (Right) Exp. III: covariance
estimation on the CTG dataset. The two directions emphasize the two disjoint subsets
of features: (a) \{FHR, ASV, ALV\}, (b) The rest of the features.
\label{fig:Effect-of-noise-direction}}
\end{figure*}

Table \ref{tab:Exp3_results} reports the results for Exp. III. Here
are the key observations. 
\begin{itemize}
\item Compared to the non-private baseline, the MVG mechanism
has very small RSS error of $.06657$. 
\item Compared to other $(\epsilon,\delta)$-basic mechanisms, i.e. the Gaussian mechanism
and the JL transform, the MVG mechanism provides better utility with the
same privacy guarantee (.004 and .001 smaller RSS error, respectively). 
\item Compared to other $\epsilon$-basic mechanisms, i.e. the Laplace and Exponential mechanisms,
the MVG mechanism gives better utility with slightly weaker $(\epsilon,1/N)$-differential
privacy guarantee (.005 and .006 smaller RSS error, respectively). 
\end{itemize}
Overall, the MVG mechanism provides the best utility (smallest error).
When compared to other methods with stronger privacy guarantee, the
MVG mechanism can provide much higher utility. Again, we point out
that in some settings, the trade-off of weaker privacy for better
utility might be favorable \cite{RefWorks:418,RefWorks:419}, and
our approach provides the best trade-off.

\section{Discussion and Future Works}

\subsection{Effect of Directional Noise on Utility}

In Sec. \ref{subsec:Utility-Gain-via-dir-noise}, we discuss how the choice of noise directions can affect the utility. Here, we investigate this effect on the obtained utility in the three experiments. Fig. \ref{fig:Effect-of-noise-direction} depicts our results. 

Fig. \ref{fig:Effect-of-noise-direction}, Left, shows the direction comparison from Exp. I. We compare four choices of directions. Direction (a), which uses the domain knowledge (ALT) and the teacher label (Y), yields the best result when compared to: (b) using only the domain knowledge (ALT), (c) using only the teacher label (Y), and (d)
using an arbitrary extra feature (ALT+Y+AST).

Fig. \ref{fig:Effect-of-noise-direction}, Middle, shows the direction comparison from Exp. II. We compare four choices of directions. Direction (a), which makes full use of the prior information (ANC0 and ANC3),
performs best when compared to: (b), (c) using only partial prior
information (ANC0 or ANC3, respectively), and (d) having the wrong priors completely
(ANC1 and ANC2).

Fig. \ref{fig:Effect-of-noise-direction}, Right, shows the comparison from Exp. III. We compare three choices of directions. Direction (a), which uses the domain knowledge (FHR, ASV, ALV3), gives the best performance compared to: (b) using the completely wrong priors
(all other features), and (c) having no prior at all (i.i.d.).

The results illustrate three key points. First, as seen in all three plots in Fig. \ref{fig:Effect-of-noise-direction},
the choice of directions has an effect on the performance. Second,
as indicated by Fig. \ref{fig:Effect-of-noise-direction}, Right,
directional noise performs much better than i.i.d. noise. Third, as
seen in Fig. \ref{fig:Effect-of-noise-direction}, Left and Middle,
there may be multiple instances of directional noise that can lead
to comparable performance. The last observation shows the promise of the notion of directional noise, as it signals the robustness
of the approach.

\subsection{Runtime Comparison}

\begin{table}
\begin{centering}
\begin{tabular}{|>{\centering}p{4cm}|c|c|c|}
\hline 
\multirow{2}{*}{{\small{}Method}} & \multicolumn{3}{c|}{{\small{}Runtime (ms)}}\tabularnewline
\cline{2-4} 
 & {\small{}Exp. I} & {\small{}Exp. II} & {\small{}Exp. III}\tabularnewline
\hline 
\hline 
{\small{}MVG} & {\small{}36.2} & {\small{}1.8} & {\small{}$3.2\times10^{3}$}\tabularnewline
\hline 
{\small{}Gaussian (\cite{RefWorks:186}, \cite{RefWorks:249})} & {\small{}1.0} & {\small{}0.4} & {\small{}10.0}\tabularnewline
\hline 
{\small{}JL transform (\cite{RefWorks:399}, \cite{RefWorks:313})} & {\small{}192.7} & {\small{}637.4} & {\small{}$2.57\times10^{6}$}\tabularnewline
\hline 
{\small{}Laplace (\cite{RefWorks:195})} & {\small{}0.4} & {\small{}0.5} & {\small{}8.0}\tabularnewline
\hline 
{\small{}Exponential (\cite{RefWorks:174}, \cite{RefWorks:178})} & {\small{}627.2} & {\small{}2,112.7} & {\small{}$2.00\times10^{6}$}\tabularnewline
\hline 
\end{tabular}
\par\end{centering}
\caption{Runtime of each method on the three experiments.\label{tab:Runtime}}

\end{table}

Next, we present the empirical runtime comparison between the MVG mechanism and the compared methods in Table \ref{tab:Runtime}. The experiments are run on an AMD Opteron 6320 Processor with 4 cores using Python 2.7, along with NumPy \cite{oliphant2006guide}, SciPy \cite{scipy}, Scikit-learn \cite{RefWorks:231}, and emcee \cite{foreman2013emcee} packages. The results show that, although the MVG mechanism runs much slower than the Gaussian and Laplace mechanisms, it runs much faster than the JL transform and the Exponential mechanism.

Both observations are as expected. First, the MVG mechanism runs slower than the i.i.d.-based Gaussian and Laplace mechanisms because it incurs the computational overhead of deriving the non-i.i.d. noise. The amount of overhead depends on the size of the query output as discussed in Sec. \ref{subsec:Sampling-from-mvg}. Second, the MVG mechanism runs much faster than the JL transform because, in addition to requiring SVD to modify the singular values of the matrix query and i.i.d. Gaussian samples similar to the MVG mechanism, the JL transform has a runtime overhead for the construction of its projection matrix, which consists of multiple matrix multiplications. Finally, the MVG mechanism runs much faster than the Exponential mechanism since drawing samples from the distribution defined by the Exponential mechanism may not be efficient.

\subsection{Directional Noise as a Generalized Subspace Projection}

Directional noise provides utility gain by adding less noise in useful
directions and more in the others. This has a connection to subspace projection or dimensionality
reduction, which simply removes the non-useful directions. The main
difference between the two is that, in directional noise, the non-useful
directions are kept, although are highly perturbed. However, despite
being highly perturbed, these directions may still be able to contribute
to the utility performance. 

We test this hypothesis by running two
additional regression experiments (Exp. I) as follows. Given the same
two important features (ALT \& Y), we use the Gaussian mechanism \cite{RefWorks:186}
and the JL transform method \cite{RefWorks:399} to perform the regression
task using only these two features. With $\epsilon=1$ and $\delta=1/N$,
the results are $(2.538\pm.065)\times10^{-2}$ and $(2.863\pm.022)\times10^{-2}$
of RMSE, respectively. Noticeably, these results are significantly
worse than that of the MVG mechanism, with the same privacy guarantee.
Specifically, by incorporating all features with directional noise
via the MVG mechanism, we can achieve over 150\% gain in utility over
the dimensionality reduction alternatives.

\subsection{Exploiting Structural Characteristics of the Matrices}

In this work, we derive the sufficient condition for the MVG mechanism
without making any assumption on the query function. However, many
practical matrix-valued query functions have a specific structure,
e.g. the covariance matrix is positive semi definite (PSD) \cite{RefWorks:208},
the Laplacian matrix \cite{RefWorks:329} is symmetric.
Therefore, future works may look into exploiting these intrinsic characteristics of the matrices
in the derivation of the differential-privacy condition for the MVG mechanism.

 \subsection{Precision Allocation Strategy Design}

 \label{subsec:utility_analysis_via_mi}

Alg. \ref{alg:mvg_design_unimodal} and Alg. \ref{alg:mvg_design_equimodal}
 take as an input the precision allocation strategy vector $\boldsymbol{\theta}\in(0,1)^{m}$:$\bigl|\boldsymbol{\theta}\bigr|_{1}=1$.
 Elements of $\boldsymbol{\theta}$ are chosen to emphasize how informative
 or useful each direction is. The design of $\boldsymbol{\theta}$
 to optimize the utility gain via the directional noise is an interesting
 topic for future research. For example, in our experiments, we use
 the intuition that our prior knowledge only tells us whether the directions
 are highly informative or not, but we \emph{do not know} the granularity
 of the level of usefulness of these directions. Hence, we adopt the
 \emph{binary allocation} strategy, i.e. give most precision budget
 to the useful directions in equal amount, and give the rest of the
 budget to the other directions in equal amount. An interesting direction for future work is to 
 investigate general instances
 when the knowledge about the directions is more granular.

\section{Conclusion}

We study the matrix-valued query function in differential privacy,
and present the MVG mechanism that is designed specifically for this type of query function.
We prove that the MVG mechanism guarantees
($\epsilon,\delta$)-differential privacy, and, consequently, introduce the novel concept
of directional noise, which can be used to reduce the impact of the
noise on the utility of the query answer. Finally, we evaluate our
approach experimentally for three matrix-valued query functions on
three privacy-sensitive datasets, and the results show that our approach
can provide the utility improvement over existing methods in all of the experiments.

\section*{Acknowledgement}

The authors would like to thank the reviewers for their valuable feedback that helped improve the paper. This work was supported in part by the National Science Foundation (NSF) under Grants CNS-1553437, CCF-1617286, and CNS-1702808; an Army Research Office YIP Award; and faculty research awards from Google, Cisco, Intel, and IBM.

\bibliographystyle{ACM-Reference-Format}
\bibliography{mvg_references}

\appendix

\section{Full Proof of \texorpdfstring{$(\epsilon,\delta)$}{(epsilon,delta)}-Differential Privacy} \label{sec:Full-Proof}

We present the full proof of the sufficient condition for the MVG
mechanism to guarantee $(\epsilon,\delta)$-differential privacy presented in Theorem
\ref{thm:design_general} here. 
\begin{proof}
The MVG mechanism guarantees differential privacy if for every pair
of neighboring datasets $\{\mathbf{X}_{1},\mathbf{X}_{2}\}$ and all
possible measurable sets $\mathbf{S}\subseteq\mathbb{R}^{m\times n}$,
\[
\Pr\left[f(\mathbf{X}_{1})+\mathcal{Z}\in\mathbf{S}\right]\leq\exp(\epsilon)\Pr\left[f(\mathbf{X}_{2})+\mathcal{Z}\in\mathbf{S}\right].
\]
The proof now follows by observing that (Lemma \ref{lem:affine_tx}),
\[
\mathcal{Z}=\mathbf{W}_{\boldsymbol{\Sigma}}\boldsymbol{\Lambda}_{\boldsymbol{\Sigma}}^{1/2}\mathcal{N}\boldsymbol{\Lambda}_{\boldsymbol{\Psi}}^{1/2}\mathbf{W}_{\boldsymbol{\Psi}}^{T},
\]
and defining the following events: 
\[
\mathbf{R}_{1}=\{\mathcal{N}:\|\mathcal{N}\|_{F}^{2}\le\zeta(\delta)^{2}\},\,\mathbf{R}_{2}=\{\mathcal{N}:\|\mathcal{N}\|_{F}^{2}>\zeta(\delta)^{2}\},
\]
where $\zeta(\delta)$ is defined in Theorem \ref{thm:laurent_massart}.
Next, observe that 
\begin{align*}
 & \Pr\left[f(\mathbf{X}_{1})+\mathcal{Z}\in\mathbf{S}\right]\\
 & =\Pr\left[\left(\{f(\mathbf{X}_{1})+\mathcal{Z}\in\mathbf{S}\}\cap\mathbf{R}_{1}\right)\cup\left(\{f(\mathbf{X}_{1})+\mathcal{Z}\in\mathbf{S}\}\cap\mathbf{R}_{2}\right)\right]\\
 & \le\Pr\left[\{f(\mathbf{X}_{1})+\mathcal{Z}\in\mathbf{S}\}\cap\mathbf{R}_{1}\right]+\Pr\left[\{f(\mathbf{X}_{1})+\mathcal{Z}\in\mathbf{S}\}\cap\mathbf{R}_{2}\right],
\end{align*}
where the last inequality follows from the union bound. By Theorem
\ref{thm:laurent_massart} and the definition of the set $\mathbf{R}_{2}$,
we have, 
\begin{align*}
\Pr\left[\{f(\mathbf{X}_{1})+\mathcal{Z}\in\mathbf{S}\}\cap\mathbf{R}_{2}\right]\le\Pr\left[\mathbf{R}_{2}\right]=1-\Pr\left[\mathbf{R}_{1}\right]\le\delta.
\end{align*}

In the rest of the proof, we find sufficient conditions for the following
inequality to hold: 
\begin{align*}
\Pr\left[f(\mathbf{X}_{1})+\mathcal{Z}\in(\mathbf{S}\cap\mathbf{R}_{1})\right]\le\exp(\epsilon)\Pr\left[f(\mathbf{X}_{2})+\mathcal{Z}\in\mathbf{S}\right].
\end{align*}
this would complete the proof of differential privacy guarantee.

Using the definition of $\mathcal{MVG}_{m,n}(\mathbf{0},\boldsymbol{\Sigma},\boldsymbol{\Psi})$
(Definition \ref{def:tmvg_dist}), this is satisfied if we have, 
\begin{align*}
\int_{\mathbf{S}\cap\mathbf{R}_{1}}e^{\{-\frac{1}{2}\mathrm{tr}[\boldsymbol{\Psi}^{-1}(\mathbf{Y}-f(\mathbf{X}_{1}))^{T}\boldsymbol{\Sigma}^{-1}(\mathbf{Y}-f(\mathbf{X}_{1}))]\}}d\mathbf{Y} & \leq\\
e^{\epsilon}\int_{\mathbf{S}\cap\mathbf{R}_{1}}e^{\{-\frac{1}{2}\mathrm{tr}[\boldsymbol{\Psi}^{-1}(\mathbf{Y}-f(\mathbf{X}_{2}))^{T}\boldsymbol{\Sigma}^{-1}(\mathbf{Y}-f(\mathbf{X}_{2}))]\}}d\mathbf{Y}.
\end{align*}
By inserting $\frac{\exp\{-\frac{1}{2}\mathrm{tr}[\boldsymbol{\Psi}^{-1}(\mathbf{Y}-f(\mathbf{X}_{2}))^{T}\boldsymbol{\Sigma}^{-1}(\mathbf{Y}-f(\mathbf{X}_{2}))]\}}{\exp\{-\frac{1}{2}\mathrm{tr}[\boldsymbol{\Psi}^{-1}(\mathbf{Y}-f(\mathbf{X}_{2}))^{T}\boldsymbol{\Sigma}^{-1}(\mathbf{Y}-f(\mathbf{X}_{2}))]\}}$
inside the integral on the left side, it is sufficient to show that
\[
\frac{\exp\{-\frac{1}{2}\mathrm{tr}[\boldsymbol{\Psi}^{-1}(\mathbf{Y}-f(\mathbf{X}_{1}))^{T}\boldsymbol{\Sigma}^{-1}(\mathbf{Y}-f(\mathbf{X}_{1}))]\}}{\exp\{-\frac{1}{2}\mathrm{tr}[\boldsymbol{\Psi}^{-1}(\mathbf{Y}-f(\mathbf{X}_{2}))^{T}\boldsymbol{\Sigma}^{-1}(\mathbf{Y}-f(\mathbf{X}_{2}))]\}}\leq\exp(\epsilon),
\]
for all $\mathbf{Y}\in\mathbf{S}\cap\mathbf{R}_{1}$. With some algebraic
manipulations, the left hand side of this condition can be expressed
as, 
\begin{align*}
= & \exp\{-\frac{1}{2}\mathrm{tr}[\boldsymbol{\Psi}^{-1}\mathbf{Y}{}^{T}\mathbf{\boldsymbol{\Sigma}}^{-1}(f(\mathbf{X}_{2})-f(\mathbf{X}_{1}))\\
 & +\boldsymbol{\Psi}^{-1}(f(\mathbf{X}_{2})-f(\mathbf{X}_{1}))^{T}\boldsymbol{\Sigma}^{-1}\mathbf{Y}]\}\\
 & -\boldsymbol{\Psi}^{-1}f(\mathbf{X}_{2})^{T}\boldsymbol{\Sigma}^{-1}f(\mathbf{X}_{2})+\boldsymbol{\Psi}^{-1}f(\mathbf{X}_{1})^{T}\boldsymbol{\Sigma}^{-1}f(\mathbf{X}_{1})\\
= & \exp\{\frac{1}{2}\mathrm{tr}[\boldsymbol{\Psi}^{-1}\mathbf{Y}^{T}\mathbf{\boldsymbol{\Sigma}}^{-1}\boldsymbol{\Delta}+\boldsymbol{\Psi}^{-1}\boldsymbol{\Delta}^{T}\boldsymbol{\Sigma}^{-1}\mathbf{Y}\\
 & +\boldsymbol{\Psi}^{-1}f(\mathbf{X}_{2})^{T}\boldsymbol{\Sigma}^{-1}f(\mathbf{X}_{2})-\boldsymbol{\Psi}^{-1}f(\mathbf{X}_{1})^{T}\boldsymbol{\Sigma}^{-1}f(\mathbf{X}_{1})]\},
\end{align*}
where $\boldsymbol{\Delta}=f(\mathbf{X}_{1})-f(\mathbf{X}_{2})$.
This quantity has to be bounded by $\leq\exp(\epsilon)$, so we present
the following \emph{characteristic equation}, which has to be satisfied
for all possible neighboring $\{\mathbf{X}_{1},\mathbf{X}_{2}\}$
and all $\mathbf{Y}\in\mathbf{S}\cap\mathbf{R}_{1}$, for the MVG
mechanism to guarantee $(\epsilon,\delta)$-differential privacy:
\begin{align*}
\mathrm{tr}[\boldsymbol{\Psi}^{-1}\mathcal{Y}^{T}\boldsymbol{\Sigma}^{-1}\boldsymbol{\Delta}+\boldsymbol{\Psi}^{-1}\boldsymbol{\Delta}^{T}\boldsymbol{\Sigma}^{-1}\mathcal{Y}\\
+\boldsymbol{\Psi}^{-1}f(\mathbf{X}_{2})^{T}\boldsymbol{\Sigma}^{-1}f(\mathbf{X}_{2})-\boldsymbol{\Psi}^{-1}f(\mathbf{X}_{1})^{T}\boldsymbol{\Sigma}^{-1}f(\mathbf{X}_{1})] & \leq2\epsilon.
\end{align*}
Specifically, we want to show that this inequality holds with probability
$1-\delta$.

From the characteristic equation, the proof analyzes the four terms
in the sum separately since the trace is additive.

\emph{The first term}: $\mathrm{tr}[\boldsymbol{\Psi}^{-1}\mathbf{Y}^{T}\boldsymbol{\Sigma}^{-1}\boldsymbol{\Delta}]$.
First, let us denote $\mathcal{Y}=f(\mathbf{X})+\mathcal{Z}$, where
$f(\mathbf{X})$ and $\mathcal{Z}$ are any possible instances of
the query and the noise, respectively. Then, we can rewrite the first
term as, $\mathrm{tr}[\boldsymbol{\Psi}^{-1}f(\mathbf{X})^{T}\boldsymbol{\Sigma}^{-1}\boldsymbol{\Delta}]+\mathrm{tr}[\boldsymbol{\Psi}^{-1}\mathcal{Z}^{T}\boldsymbol{\Sigma}^{-1}\boldsymbol{\Delta}]$.
The earlier part can be bounded from Lemma \ref{lem:v_neumann}: 
\[
\mathrm{tr}[\boldsymbol{\Psi}^{-1}f(\mathbf{X})^{T}\boldsymbol{\Sigma}^{-1}\boldsymbol{\Delta}]\leq\sum_{i=1}^{r}\sigma_{i}(\boldsymbol{\Psi}^{-1}f(\mathbf{X})^{T})\sigma_{i}(\boldsymbol{\Delta}^{T}\boldsymbol{\Sigma}^{-1}).
\]
Lemma \ref{lem:singular_bound} can then be used to bound each singular
value. In more detail, 
\[
\sigma_{i}(\boldsymbol{\Psi}^{-1}f(\mathbf{X})^{T})\leq\frac{\left\Vert \boldsymbol{\Psi}^{-1}f(\mathbf{X})^{T}\right\Vert _{F}}{\sqrt{i}}\leq\frac{\left\Vert \boldsymbol{\Psi}^{-1}\right\Vert _{F}\left\Vert f(\mathbf{X})\right\Vert _{F}}{\sqrt{i}},
\]
where the last inequality is via the sub-multiplicative property of
a matrix norm \cite{RefWorks:290}. It is well-known that $\bigl\Vert\boldsymbol{\Psi}^{-1}\bigr\Vert_{F}=\bigl\Vert\boldsymbol{\sigma}(\boldsymbol{\Psi}^{-1})\bigr\Vert_{2}$
(cf. \cite[p. 342]{RefWorks:208}), and since $\gamma=\sup_{\mathbf{X}}\bigl\Vert f(\mathbf{X})\bigr\Vert_{F}$,
\[
\sigma_{i}(\boldsymbol{\Psi}^{-1}f(\mathbf{X})^{T})\leq\gamma\left\Vert \boldsymbol{\sigma}(\boldsymbol{\Psi}^{-1})\right\Vert _{2}/\sqrt{i}.
\]
Applying the same steps to the other singular value, and using Definition
\ref{def:sensitivity}, we can write, 
\[
\sigma_{i}(\boldsymbol{\Delta}^{T}\boldsymbol{\Sigma}^{-1})\leq s_{2}(f)\left\Vert \boldsymbol{\sigma}(\boldsymbol{\Sigma}^{-1})\right\Vert _{2}/\sqrt{i}.
\]
By substituting the two singular value bounds, the earlier part of the
first term can be bounded by, 
\begin{align}
 & \mathrm{tr}[\boldsymbol{\Psi}^{-1}f(\mathbf{X})^{T}\boldsymbol{\Sigma}^{-1}\boldsymbol{\Delta}]\nonumber \\
\leq & \gamma s_{2}(f)H_{r}\left\Vert \boldsymbol{\sigma}(\boldsymbol{\Sigma}^{-1})\right\Vert _{2}\left\Vert \boldsymbol{\sigma}(\boldsymbol{\Psi}^{-1})\right\Vert _{2}.\label{eq:first_term_part1}
\end{align}

The latter part of the first term is more complicated since it involves
$\mathcal{Z}$, so we will derive the bound in more detail. First,
let us define $\mathcal{N}$ to be drawn from $\mathcal{MVG}_{m,n}(\mathbf{0},\mathbf{I}_{m},\mathbf{I}_{n})$,
so we can write $\mathcal{Z}$ in terms of $\mathcal{N}$ using affine
transformation \cite{RefWorks:279}: $\mathcal{Z}=\mathbf{B}_{\boldsymbol{\Sigma}}\mathcal{N}\mathbf{B}_{\boldsymbol{\Psi}}^{T}$.
To specify $\mathbf{B}_{\boldsymbol{\Sigma}}$ and $\mathbf{B}_{\boldsymbol{\Psi}}$,
we solve the following linear equations, respectively, 
\[
\mathbf{B}_{\boldsymbol{\Sigma}}\mathbf{B}_{\boldsymbol{\Sigma}}^{T}=\boldsymbol{\Sigma};\mathbf{B}_{\boldsymbol{\Psi}}\mathbf{B}_{\boldsymbol{\Psi}}^{T}=\boldsymbol{\Psi}.
\]
This can be readily solved with SVD (cf. \cite[p. 440]{RefWorks:208});
hence, $\mathbf{B}_{\boldsymbol{\Sigma}}=\mathbf{W}_{\boldsymbol{\Sigma}}\boldsymbol{\Lambda}_{\boldsymbol{\Sigma}}^{\frac{1}{2}}$,
and $\mathbf{B}_{\boldsymbol{\Psi}}=\mathbf{W}_{\boldsymbol{\Psi}}\boldsymbol{\Lambda}_{\boldsymbol{\Psi}}^{\frac{1}{2}}$,
where $\boldsymbol{\Sigma}=\mathbf{W}_{\boldsymbol{\Sigma}}\boldsymbol{\Lambda}_{\boldsymbol{\Sigma}}\mathbf{W}_{\boldsymbol{\Sigma}}^{T}$,
and $\boldsymbol{\Psi}=\mathbf{W}_{\boldsymbol{\Psi}}\boldsymbol{\Lambda}_{\boldsymbol{\Psi}}\mathbf{W}_{\boldsymbol{\Psi}}^{T}$
from SVD. Therefore, $\mathcal{Z}$ can be written as, 
\[
\mathcal{Z}=\mathbf{W}_{\boldsymbol{\Sigma}}\boldsymbol{\Lambda}_{\boldsymbol{\Sigma}}^{1/2}\mathcal{N}\boldsymbol{\Lambda}_{\boldsymbol{\Psi}}^{1/2}\mathbf{W}_{\boldsymbol{\Psi}}^{T}.
\]
Substituting into the latter part of the first term yields, 
\[
\mathrm{tr}[\boldsymbol{\Psi}^{-1}\mathcal{Z}^{T}\boldsymbol{\Sigma}^{-1}\boldsymbol{\Delta}]=\mathrm{tr}[\mathbf{W}_{\boldsymbol{\Psi}}\boldsymbol{\Lambda}_{\boldsymbol{\Psi}}^{-1/2}\mathcal{N}\boldsymbol{\Lambda}_{\boldsymbol{\Sigma}}^{-1/2}\mathbf{W}_{\boldsymbol{\Sigma}}^{T}\boldsymbol{\Delta}].
\]
This can be bounded by Lemma \ref{lem:v_neumann} as, 
\[
\mathrm{tr}[\boldsymbol{\Psi}^{-1}\mathcal{Z}^{T}\boldsymbol{\Sigma}^{-1}\boldsymbol{\Delta}]\leq\sum_{i=1}^{r}\sigma_{i}(\mathbf{W}_{\boldsymbol{\Psi}}\boldsymbol{\Lambda}_{\boldsymbol{\Psi}}^{-1/2}\mathcal{N}\boldsymbol{\Lambda}_{\boldsymbol{\Sigma}}^{-1/2}\mathbf{W}_{\boldsymbol{\Sigma}}^{T})\sigma_{i}(\boldsymbol{\Delta}).
\]
The two singular values can then be bounded by Lemma \ref{lem:singular_bound}.
For the first singular value, 
\begin{align*}
\sigma_{i}(\mathbf{W}_{\boldsymbol{\Psi}}\boldsymbol{\Lambda}_{\boldsymbol{\Psi}}^{-1/2}\mathcal{N}\boldsymbol{\Lambda}_{\boldsymbol{\Sigma}}^{-1/2}\mathbf{W}_{\boldsymbol{\Sigma}}^{T}) & \leq\frac{\left\Vert \mathbf{W}_{\boldsymbol{\Psi}}\boldsymbol{\Lambda}_{\boldsymbol{\Psi}}^{-1/2}\mathcal{N}\boldsymbol{\Lambda}_{\boldsymbol{\Sigma}}^{-1/2}\mathbf{W}_{\boldsymbol{\Sigma}}^{T}\right\Vert _{F}}{\sqrt{i}}\\
 & \leq\frac{\left\Vert \boldsymbol{\Lambda}_{\boldsymbol{\Sigma}}^{-1/2}\right\Vert _{F}\left\Vert \boldsymbol{\Lambda}_{\boldsymbol{\Psi}}^{-1/2}\right\Vert _{F}\left\Vert \mathcal{N}\right\Vert _{F}}{\sqrt{i}}.
\end{align*}
By definition, $\left\Vert \boldsymbol{\Lambda}_{\boldsymbol{\Sigma}}^{-1/2}\right\Vert _{F}=\sqrt{\mathrm{tr}(\boldsymbol{\Lambda}_{\boldsymbol{\Sigma}}^{-1})}=\left\Vert \boldsymbol{\sigma}(\boldsymbol{\Sigma}^{-1})\right\Vert _{1}^{1/2}$,
where $\left\Vert \cdot\right\Vert _{1}$ is the 1-norm. By norm relation,
$\left\Vert \boldsymbol{\sigma}(\boldsymbol{\Sigma}^{-1})\right\Vert _{1}^{1/2}\leq m^{1/4}\left\Vert \boldsymbol{\sigma}(\boldsymbol{\Sigma}^{-1})\right\Vert _{2}^{1/2}$.
With similar derivation for $\bigl\Vert\boldsymbol{\Lambda}_{\boldsymbol{\Psi}}^{-1/2}\bigr\Vert_{F}$
and with Theorem \ref{thm:laurent_massart}, the singular value can
be bounded with probability $1-\delta$ as, 
\begin{align*}
 & \sigma_{i}(\mathbf{W}_{\boldsymbol{\Psi}}\boldsymbol{\Lambda}_{\boldsymbol{\Psi}}^{-\frac{1}{2}}\mathcal{N}\boldsymbol{\Lambda}_{\boldsymbol{\Sigma}}^{-\frac{1}{2}}\mathbf{W}_{\boldsymbol{\Sigma}}^{T})\\
\leq & \frac{(mn)^{\frac{1}{4}}\zeta(\delta)\left\Vert \boldsymbol{\sigma}(\boldsymbol{\Sigma}^{-1})\right\Vert _{2}^{\frac{1}{2}}\left\Vert \boldsymbol{\sigma}(\boldsymbol{\Psi}^{-1})\right\Vert _{2}^{\frac{1}{2}}}{\sqrt{i}}.
\end{align*}
Meanwhile, the other singular value can be readily bounded with Lemma
\ref{lem:singular_bound} as $\sigma_{i}(\boldsymbol{\Delta})\leq s_{2}(f)/\sqrt{i}$.
Hence, the latter part of the first term is bounded with probability
$\geq1-\delta$ as, 
\begin{align}
 & \mathrm{tr}[\boldsymbol{\Psi}^{-1}\mathbf{Z}^{T}\boldsymbol{\Sigma}^{-1}\boldsymbol{\Delta}]\nonumber \\
\leq & (mn)^{\frac{1}{4}}\zeta(\delta)H_{r}s_{2}(f)\left\Vert \boldsymbol{\sigma}(\boldsymbol{\Sigma}^{-1})\right\Vert _{2}^{\frac{1}{2}}\left\Vert \boldsymbol{\sigma}(\boldsymbol{\Psi}^{-1})\right\Vert _{2}^{\frac{1}{2}}.\label{eq:first_term_part2}
\end{align}
Since the parameter $(\left\Vert \boldsymbol{\sigma}(\boldsymbol{\Sigma}^{-1})\right\Vert _{2}\left\Vert \boldsymbol{\sigma}(\boldsymbol{\Psi}^{-1})\right\Vert _{2})^{1/2}$
appears a lot in the derivation, let us define 
\[
\phi=(\left\Vert \boldsymbol{\sigma}(\boldsymbol{\Sigma}^{-1})\right\Vert _{2}\left\Vert \boldsymbol{\sigma}(\boldsymbol{\Psi}^{-1})\right\Vert _{2})^{1/2}.
\]
Finally, combining Eq. (\ref{eq:first_term_part1}) and (\ref{eq:first_term_part2})
yields the bound for the first term, 
\[
\mathrm{tr}[\boldsymbol{\Psi}^{-1}\mathbf{Y}^{T}\boldsymbol{\Sigma}^{-1}\boldsymbol{\Delta}]\leq\gamma H_{r}s_{2}(f)\phi^{2}+(mn)^{1/4}\zeta(\delta)H_{r}s_{2}(f)\phi.
\]

\emph{The second term}: $\mathrm{tr}[\boldsymbol{\Psi}^{-1}\boldsymbol{\Delta}^{T}\boldsymbol{\Sigma}^{-1}\mathbf{Y}]$.
By following the same steps as in the first term, it can be shown
that the second term has the exact same bound as the first terms,
i.e. 
\[
\mathrm{tr}[\boldsymbol{\Psi}^{-1}\boldsymbol{\Delta}^{T}\boldsymbol{\Sigma}^{-1}\mathbf{Y}]\leq\gamma H_{r}s_{2}(f)\phi^{2}+(mn)^{1/4}\zeta(\delta)H_{r}s_{2}(f)\phi.
\]

\emph{The third term}: $\mathrm{tr}[\boldsymbol{\Psi}^{-1}f(\mathbf{X}_{2})^{T}\boldsymbol{\Sigma}^{-1}f(\mathbf{X}_{2})]$.
Applying Lemma \ref{lem:v_neumann} and \ref{lem:singular_bound},
we can readily bound it as, 
\[
\mathrm{tr}[\boldsymbol{\Psi}^{-1}f(\mathbf{X}_{2})^{T}\boldsymbol{\Sigma}^{-1}f(\mathbf{X}_{2})]\leq\gamma^{2}H_{r}\phi^{2}.
\]

\emph{The fourth term}: $-\mathrm{tr}[\boldsymbol{\Psi}^{-1}f(\mathbf{X}_{1})^{T}\boldsymbol{\Sigma}^{-1}f(\mathbf{X}_{1})]$.
Since this term has the negative sign, we consider the absolute value
instead. Using Lemma \ref{lem:abs_trace_bound}, 
\[
\left|\mathrm{tr}[\boldsymbol{\Psi}^{-1}f(\mathbf{X}_{1})^{T}\boldsymbol{\Sigma}^{-1}f(\mathbf{X}_{1})]\right|\leq\sum_{i=1}^{r}\sigma_{i}(\boldsymbol{\Psi}^{-1}f(\mathbf{X}_{1})^{T}\boldsymbol{\Sigma}^{-1}f(\mathbf{X}_{1})).
\]
Then, using the singular value bound in Lemma \ref{lem:singular_bound},
\[
\sigma_{i}(\boldsymbol{\Psi}^{-1}f(\mathbf{X}_{1})^{T}\boldsymbol{\Sigma}^{-1}f(\mathbf{X}_{1}))\leq\frac{\left\Vert \boldsymbol{\Psi}^{-1}\right\Vert _{F}\left\Vert f(\mathbf{X}_{1})\right\Vert _{F}^{2}\left\Vert \boldsymbol{\Sigma}^{-1}\right\Vert _{F}}{\sqrt{i}}.
\]
Hence, the fourth term can be bounded by, 
\[
\left|\mathrm{tr}[\boldsymbol{\Psi}^{-1}f(\mathbf{X}_{1})^{T}\boldsymbol{\Sigma}^{-1}f(\mathbf{X}_{1})]\right|\leq\gamma^{2}H_{r,1/2}\phi^{2}.
\]

\emph{Four terms combined:} by combining the four terms and rearranging
them, the characteristic equation becomes, 
\[
\alpha\phi^{2}+\beta\phi\leq2\epsilon.
\]
This is a quadratic equation, of which the solution is known to be $\phi\in[\frac{-\beta-\sqrt{\beta^{2}+8\alpha\epsilon}}{2\alpha},\frac{-\beta+\sqrt{\beta^{2}+8\alpha\epsilon}}{2\alpha}]$.
Since we know $\phi>0$, we only have the one-sided solution, 
\[
\phi\leq\frac{-\beta+\sqrt{\beta^{2}+8\alpha\epsilon}}{2\alpha},
\]
which implies the criterion in Theorem \ref{thm:design_general}. 
\end{proof}

\section{\texorpdfstring{$L_{2}$}{L2}-Sensitivities}\label{sec:L2-Sensitivities}

Here, we derive the $L_{2}$-sensitivity for the MVG mechanism used
in our experiments.

\subsection{Experiment I}

The query function is $f(\mathbf{X})=\mathbf{X}\in[0,1]^{6\times248}$.
For neighboring datasets $\{\mathbf{X},\mathbf{X}'\}$, the $L_{2}$-sensitivity
is 
\[
s_{2}(f)=\sup_{\mathbf{X},\mathbf{X}'}\left\Vert \mathbf{X}-\mathbf{X}'\right\Vert _{F}=\sup_{\mathbf{X},\mathbf{X}'}\sqrt{\sum_{i=1}^{6}(x(i)-x'(i))^{2}}=\sqrt{6}.
\]

\subsection{Experiment II}

The query function is $f(\mathbf{X})=\frac{1}{N}\mathbf{X}\mathbf{X}^{T}$,
where $\mathbf{X}\in[-1,1]^{4\times2021}$. For neighboring
datasets $\{\mathbf{X},\mathbf{X}'\}$, the $L_{2}$-sensitivity is
\[
s_{2}(f)=\sup_{\mathbf{X},\mathbf{X}'}\frac{\left\Vert \mathbf{x}_{j}\mathbf{x}_{j}^{T}-\mathbf{x}_{j}'\mathbf{x}_{j}'^{T}\right\Vert _{F}}{2021}=\frac{2\sqrt{\sum_{j=1}^{4^{2}}x_{j}(i)^{2}}}{2021}=\frac{8}{2021}.
\]

\subsection{Experiment III}

The query function is the same as Exp. I, so the $L_{2}$-sensitivity
can be readily derived as $s_{2}(f)=\sup_{\mathbf{X},\mathbf{X}'}\sqrt{\sum_{i=1}^{21}(x(i)-x'(i))^{2}}=\sqrt{21}$.

\end{document}